\title{Strictification of weakly stable type-theoretic structures using generic contexts}
\author{Rafaël Bocquet}{Eötvös Loránd University, Budapest, Hungary}{bocquet@inf.elte.hu}{https://orcid.org/0000-0001-6484-9570}
{}
\authorrunning{R. Bocquet} 
\keywords{type theory, strictification, coherence, familial representability, unification} 
\newtheorem{construction}[theorem]{Construction}
\newcommand{\inner}[1]{{\color{RedOrange}{\mathbf{#1}}}}
\newcommand{\iS}{\inner{S}}
\newcommand{\iX}{\inner{X}}
\newcommand{\iY}{\inner{Y}}
\newcommand{\opf}{\inner{f}}
\newcommand{\opg}{\inner{g}}
\newcommand{\bty}{\mathbb{ty}}
\newcommand{\btm}{\mathbb{tm}}
\newcommand{\BSort}{\mathsf{BSort}}
\newcommand{\MonoSort}{\mathsf{MonoSort}}
\newcommand{\PolySort}{\mathsf{PolySort}}
\newcommand{\Elem}{\mathsf{Elem}}
\newcommand{\Free}{\mathsf{Free}}
\newcommand{\GenTy}{\mathsf{GenTy}}
\newcommand{\GenTm}{\mathsf{GenTm}}
\newcommand{\Nf}{\mathsf{Nf}}
\newcommand{\NfTy}{\mathsf{NfTy}}
\newcommand{\nf}{\mathsf{nf}}
\begin{document}

\maketitle

\begin{abstract}
  We present a new strictification method for type-theoretic structures that are only weakly stable under substitution.
  Given weakly stable structures over some model of type theory, we construct equivalent strictly stable structures by evaluating the weakly stable structures at generic contexts.
  These generic contexts are specified using the categorical notion of familial representability.
  This generalizes the local universes method of Lumsdaine and Warren.

  We show that generic contexts can also be constructed in any category with families which is freely generated by collections of types and terms, without any definitional equality.
  This relies on the fact that they support first-order unification.
  These free models can only be equipped with weak type-theoretic structures, whose computation rules are given by typal equalities.
  Our main result is that any model of type theory with weakly stable weak type-theoretic structures admits an equivalent model with strictly stable weak type-theoretic structures.
\end{abstract}

\section{Introduction}

Type-theoretic structures are usually required to be strictly stable under substitution.
However many structures arising from category theory and homotopy theory are only specified up to isomorphism, equivalence or homotopy.
They are then only \emph{weakly} stable under substitution.
This is for instance the case for the identity types arising from weak factorization systems~\cite{HomotopyTheoreticModelsOfIdTypes} and for the constructive simplicial model of Gambino and Henry~\cite{TowardsConstructiveSimplicialModel}.
In order to interpret type theories into such structures, we have to use \emph{strictification theorems} that replace weakly stable structures by strictly stable ones.

Generally, a strictification method is a procedure that constructs, given an input model with weakly stable type structures, another model with stable type structures, connected to the original model via a zigzag of equivalences (for a suitable notion of equivalence).
Several strictification methods are known~\cite{HofmannInterpretationTTInLCCCs,RevisitingCategoricalInterpretationDTT,LocalUniversesModel,CurienSubstUpToIso,InterpretationDTTInModelCategoryLCCCs}, with different constraints on the type theories and models.
We recall two of the most general constructions.
\begin{description}
\item[Right adjoint splitting:] A strictification method~\cite{HofmannInterpretationTTInLCCCs,RevisitingCategoricalInterpretationDTT} due to Hofmann defines a new model $\CC_{\star}$ in which types over a context $\Gamma$ are coherent families of types of the base model $\CC$, indexed by the substitutions $\Delta \to \Gamma$.
  This is a \emph{cofree} construction: we pack together all the data that is needed when substituting, along with witnesses that this data is coherent, \ie{} that different ways of substituting coincide, up to isomorphism or equivalence.

  This method is known to work for extensional type theories, \ie{} type theories with the equality reflection rule, but it does not directly apply to most models arising from homotopy theory.
  In presence of equality reflection it is sufficient to consider families of types that are coherent up to isomorphism.
  A generalization would need to consider homotopy-coherent families of types and terms, that include coherence conditions in all dimensions.
  Defining a workable notion of homotopy-coherent family is however not easy.

  We note that coherence theorems proven in Uemura's PhD thesis~\cite{UemuraThesis} essentially involve such homotopy-coherent families.

\item[Left adjoint splitting/local universes:] The local universes method~\cite{LocalUniversesModel} of Lumsdaine and Warren generalizes Voevodsky's use of universes to obtain stability in the simplicial model~\cite{SimplicialModelUF}.
  It instantiates the weakly stable structures at suitable \emph{generic contexts}.
  Strict stability under substitution then follows from the stability of the construction of the generic contexts.
  In order to ensure the existence of the generic contexts, this strictification method replaces the base model $\CC$ by a new model, the \emph{local universes model} $\CC_{!}$, also called the \emph{left adjoint splitting}, in which types over $\Gamma$ are replaced by triples $(V,E,\chi)$, where $(V,E)$ is a local universe, consisting of a closed context $V$ and of a type $E$ over $V$, and $\chi$ is a substitution from $\Gamma$ to $V$.
  The generic contexts of the type and term formers then only depend on the local universes of the type parameters, but not on the map $\chi$ nor on the term parameters; this ensures that they are invariant under substitution.
  The construction of these generic contexts requires the existence of some local exponentials in the underlying category of the base model.
  This condition is called condition~(LF).
\end{description}

\paragraph*{Generic contexts}
We present a new general strictification method.
Like the local universes method, our method instantiates the weakly stable structures at generic contexts.
In the local universes construction, the generic contexts can only depend on the shapes of types, but not on the structure of terms.
We give a finer characterization of the (universal) properties required by the generic contexts, using the categorical notion of \emph{familial representability}~\cite{ConnectedLimitsFamilialRepresentabilityArtinGlueing,CorrigendaConnectedLimitsFamilialRepresentabilityArtinGlueing}.

If $x$ is an element over a context $\Gamma$ of a presheaf $X$ (such as the presheaf of types or the presheaf of terms of a given type), a generalization of $x$ is an element $x_{0}$ over some context $\Gamma_{0}$, along with a substitution $\rho : \Gamma \to \Gamma_{0}$ such that $x = x_{0}[\rho]$.
A most general generalization is a terminal object in the category of generalizations.
When they exist, the most general generalizations of $x$ and $x[\sigma]$ coincide (at least up to isomorphism).
The presheaf $X$ is \emph{familially representable} if all of its elements admit most general generalizations (with some additional naturality condition).
Equivalently, a presheaf is familially representable when it is a coproduct of a family of representable presheaves.

A weakly stable type-theoretic operation (type or term former) $T$ on a category $\CC$ is given by a dependent non-natural transformation
$T : \forall (\Gamma : \Ob_{\CC}) (x : X_{\Gamma}) \to Y_{\Gamma}(x)$,
where $X$ is a presheaf over $\CC$ and $Y$ is a dependent presheaf over $X$.
When the presheaf $X$ is familially representable, we can define a natural transformation
$T^{s} : \forall (\Gamma : \CC^{\op}) (x : X_{\Gamma}) \to Y_{\Gamma}(x)$ by
$T^{s}(\Gamma,x) \triangleq T(\Gamma_{0},x_{0})[\rho]$,
where $x_{0} : X_{\Gamma_{0}}$ is the most general generalization of $x$.
Here we have defined a strictly stable operation $T^{s}$ as the instantiation of the weakly stable operation $T$ at the \emph{generic context} $\Gamma_{0}$.

The presheaves $X$ that may occur as the sources of type-theoretic operations all have a specific shape: they are given by \emph{polynomial sorts}, which are obtained by closing the basic sorts (types and terms) under dependent products (with arities in terms) and dependent sums.
They correspond to the objects of the representable map category~\cite{UemuraFramework} that encodes the type theory.
We say that a model (a category with families) has \defemph{familially representable polynomial sorts} when the presheaves of elements of polynomial sorts are all familially representable.
Any weakly stable type-theoretic structure over a base model that satisfies that condition can be replaced by a stable type-theoretic structure.

We obtain the following theorem.
\begin{restatable}{theorem}{restateStrictId}\label{thm:lfr_poly_strict_id}
  Let $\CC$ be a CwF equipped with weakly stable identity types.
  If $\CC$ has familially representable polynomial sorts, then $\CC$ can be equipped with stable identity types that are equivalent to the weakly stable identity types.
\end{restatable}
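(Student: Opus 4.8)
The plan is to treat the weakly stable identity type structure as the list of its constituent type-theoretic operations — the formation rule $\Id$, the introduction rule $\refl$, the elimination rule $\J$, and the typal computation rule $\Jbeta$ — and to replace each of them by a strictly stable operation using the generic-context instantiation $T \mapsto T^{s}$ described above. The first step is to check that the source presheaf of each rule is a polynomial sort, so that the hypothesis that $\CC$ has familially representable polynomial sorts supplies most general generalizations. The source of $\Id$ is $\Sigma(A : \bty).\,\btm(A) \times \btm(A)$; the source of $\refl$ is $\Sigma(A : \bty).\,\btm(A)$; the source of $\J$ consists of a type $A$, a point $a$, a motive $P$ (a type in the context extended by a second point and an identity proof, hence encoded by a dependent product whose arities are terms, including a term of an identity type), a method $d$, a point $b$ and a path $p$; and the source of $\Jbeta$ is that of $\J$ with $b$ and $p$ specialized to $a$ and $\refl$. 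Each is built from $\bty$ and $\btm$ (and, for $\J$ and $\Jbeta$, the identity types already present on $\CC$) by dependent sums and by dependent products with arities in terms, hence is a polynomial sort and is familially representable by hypothesis.

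I would then carry out the strictification in dependency order, defining the target dependent presheaf of each rule in terms of the already-strictified earlier rules. Writing $(\Gamma_{0}, x_{0}, \rho)$ for the most general generalization of an input $x$, I set $\Id^{s}(\Gamma, A, a, b) \triangleq \Id(\Gamma_{0}, A_{0}, a_{0}, b_{0})[\rho]$. To obtain $\refl^{s}$, I first transport the weakly stable $\refl$ along the weak-stability equivalence $\Id \simeq \Id^{s}$ so that it becomes a weakly stable operation valued in $\Tm_{\Gamma}(\Id^{s}_{A}(a,a))$, and then instantiate it at the generic context of $\Sigma(A : \bty).\,\btm(A)$; I proceed analogously for $\J$, whose motive and path arguments are now phrased in terms of $\Id^{s}$, and for $\Jbeta$, whose output type mentions $\J^{s}$ and $\refl^{s}$.

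Next I would verify strict stability and the equivalence with the original structure. Strict stability, $T^{s}(\Delta, x[\sigma]) = T^{s}(\Gamma, x)[\sigma]$, is immediate from the coincidence of the most general generalizations of $x$ and $x[\sigma]$ together with the functoriality of substitution, exactly as in the general recipe. For the equivalence, I would use that at a generic context the comparison substitution $\rho$ is the identity, so that each strictified rule agrees there with the weak rule it was built from (up to the retargeting transport); these pointwise equivalences are natural by the coincidence of generic generalizations, and they assemble into an equivalence of identity type structures — which, since both structures validate the elimination rule $\J$, is induced by the comparison on $\Id$ and $\refl$.

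The main obstacle is the coherence between the strictified rules. The generic context of a compound input does not refine on the nose to the generic contexts of its sub-inputs: the most general generalization of $(A, a, a)$ as an input to $\Id$ carries two independent endpoints, so $\Id^{s}_{A}(a, a)$ is merely equivalent to, and not equal to, the weak type $\Id_{A}(a, a)$ in which $\refl(A, a)$ lives. This is why the introduction, elimination and computation data must be transported along the weak-stability equivalences before they are instantiated, and the delicate point is to confirm that these transports cohere — that $\refl^{s}$ really lands in $\Id^{s}$, that $\J^{s}$ consumes motives and methods expressed through $\Id^{s}$ and $\refl^{s}$, and that $\Jbeta^{s}$ still holds for $\J^{s}$ and $\refl^{s}$. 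I expect this to reduce to the principle that any equation between weakly stable operations which holds at every context, in particular at the generic ones, is preserved by the instantiation-and-substitution defining $(-)^{s}$, since substitution preserves equalities while the coincidence of generic generalizations supplies the naturality that turns a pointwise equation into a stable one.
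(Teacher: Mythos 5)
You correctly identify the crux of the problem — because you generalize \emph{both} endpoints, the most general generalization of $(A,a,a)$ for the sort $(A:\bty)\times\btm(A)\times\btm(A)$ has two independent endpoints, so the weak $\refl$ does not land in $\Id^{s}(A,a,a)$ on the nose — but your resolution of it is not carried out, and the principle you invoke to discharge it is unsound. The comparison maps $e : \Tm(\Id_{(\Gamma,A,x)}) \to \Tm(\Id^{s}_{\Gamma}(A,x,-))$ coming from weak stability (\cref{prop:ws_wsid}), as well as $\refl$ and $\J$ themselves, are only \emph{weakly} stable. Consequently the equations your transports must satisfy are not strict: for instance, $\refl^{s}_{\Gamma_{1}}(A_{1},x_{1})$ is by your definition $e(\refl)$ evaluated at the most general generalization of $(A_{1},x_{1})$ and then substituted into $\Gamma_{1}$, whereas what the weakly stable $\J$ needs in order to consume the method $d_{1}$ is $e_{\Gamma_{1}}(\refl_{(\Gamma_{1},A_{1},x_{1})})$, the evaluation at $\Gamma_{1}$ itself; these two terms differ by weak stability and are only typally equal, \emph{even at the generic context of the $\J$-sort}. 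Your closing principle concerns strict equations preserved by substitution, so it cannot produce these identifications. One must explicitly construct the paths, transport $d$ along them before applying $\J$, transport the output back along the equivalence data, and then, for $\Jbeta^{s}$, verify that the composite of all of these transports still satisfies the computation rule up to a constructible typal equality. This chain of coherence obligations is the actual mathematical content of the theorem, and the proposal leaves it entirely open.

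The paper's proof is arranged so that no transport is ever needed, and the arrangement is instructive. Since the identity type is in Paulin-Mohring form, $y$ is an index rather than a parameter: the weakly stable datum $\Id_{(\Gamma,A,x)}$ is already a type over $\Gamma.A$ depending only on $(\Gamma,A,x)$. One therefore generalizes only the parameters, taking $\partial\Id \triangleq (A:\bty)\times(x:\btm(A))$ (which is exactly your source sort for $\refl$), and defines \emph{both} $\Id^{s}$ and $\refl^{s}$ by instantiation at the \emph{same} generic context: $\Id^{s}_{\Gamma}(A,x,y) \triangleq \Id_{(\Gamma_{0},A_{0},x_{0})}[\angles{f,y}]$ and $\refl^{s}_{\Gamma}(A,x) \triangleq \refl_{(\Gamma_{0},A_{0},x_{0})}[f]$, so that $\refl^{s}(A,x) : \Tm(\Id^{s}(A,x,x))$ holds definitionally. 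The sort $\partial\J$ is then phrased via $\Id^{s}$ and $\refl^{s}$ (as you also propose), and the key observation is that at its generic context $\Gamma_{1}$, with $f : \Gamma_{1} \to \Gamma_{0}$ the map to the most general generalization of $(A_{1},x_{1})$, the equalities $\Id^{s}_{\Gamma_{1}}(A_{1},x_{1},y) = \Id_{(\Gamma_{0},A_{0},x_{0})}[\angles{f,y}]$ and $\refl^{s}_{\Gamma_{1}}(A_{1},x_{1}) = \refl_{(\Gamma_{0},A_{0},x_{0})}[f]$ hold \emph{by definition} of $\Id^{s}$ and $\refl^{s}$; hence $(\Gamma_{1},f,P_{1},d_{1})$ is literally an $\Id$-elimination context for the weakly stable structure, and one can simply set $\J^{s} \triangleq \J_{(\Gamma_{0},A_{0},x_{0},\Gamma_{1},f,P_{1},d_{1})}[\angles{g,y,p}]$ and $\Jbeta^{s} \triangleq \Jbeta_{(\Gamma_{0},A_{0},x_{0},\Gamma_{1},f,P_{1},d_{1})}[g]$. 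If you redo your construction with the one-endpoint sort for $\Id$, the coherence problem you flag disappears rather than needing to be solved.
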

It is straightforward to generalize this construction to any other weakly stable type-theoretic structure.

The condition~(LF) of the local universe method~\cite{LocalUniversesModel} implies that the local universe model $\CC_{!}$ has familially representable polynomial sorts; thus the local universe method factors through our method.

\paragraph*{Free categories with families}

There are models that have familially representable polynomials sorts without satisfying condition~(LF).
We show that this is the case for all categories with families (CwFs) that are freely generated by some collection of generating types and terms.
Freely generated CwFs can also be seen as generalized (\ie{} dependently sorted) algebraic theories~\cite{CartmellGATs} without equations.
Using the terminology of weak factorization systems, the freely generated CwFs can be described as the cellular objects with respect to some set $I$ of CwF morphisms.

Thanks to the absence of equations, free CwFs support \emph{first-order unification}; any two unifiable types, terms or substitutions admit a \emph{most general unifier}.
These most general unifiers are used to construct most general generalization for polynomial sorts.
\begin{restatable}{theorem}{restateMggFree}\label{thm:mgg_free}
  If a CwF $\CC$ is freely generated ($I$-cellular), then it has locally familially representable polynomials sorts.
\end{restatable}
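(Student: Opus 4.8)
The plan is to derive everything from a single structural feature of free CwFs: because an $I$-cellular CwF is freely generated with no imposed equations, its types, terms and substitutions have rigid syntactic representatives, and consequently the category supports first-order unification. Concretely, I would first establish that any parallel pair of types, terms, or substitutions that can be equalized admits a most general equalizing substitution, that is, a most general unifier. The absence of equations means that two generator-headed expressions coincide only when they use the same generator applied to equal arguments, so unification reduces by structural decomposition to unifying the arguments, exactly as in Robinson's algorithm; termination and the occurs check are underwritten by the well-founded filtration of the cellular construction, which assigns to each element a rank recording how late in the transfinite composite its generators are adjoined. This is the technical heart of the theorem, and I expect it to be the main obstacle: unlike the simply-typed case, here types may contain terms and substitutions are typed tuples, so the unification procedure must respect the dependency structure and simultaneously unify a type together with the terms that inhabit it.

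Granting unification, the remaining task is to convert most general unifiers into most general generalizations, proceeding by induction on the structure of polynomial sorts. For a basic sort, an element over $\Gamma$ is a substitution instance $\Sc[\sigma]$ of a generator $\Sc$ living over its defining context $\Delta$; I claim its most general generalization is $(\Delta, \Sc, \sigma)$, with the witnessing substitution $\rho \triangleq \sigma$. Terminality in the category of generalizations follows from rigidity: by freeness any competing generalization is again a $\Sc$-instance $\Sc[\tau]$, and since $\Sc[-]$ is injective the mediating substitution is forced to equal $\tau$ and is unique. The inductive steps then establish closure of local familial representability under the sort-formers. For a dependent sum I would generalize each component and glue them, taking the most general common refinement of the two generic contexts; this refinement is precisely a most general unifier of the substitutions witnessing the dependency of the second component on the first. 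For a dependent product whose arity is a term sort, a global section corresponds to a solution of the first-order matching problem obtained by confronting the generic argument with the generic elements of the codomain bundle, and the generic section is exactly the most general unifier; this is the step that genuinely requires unification rather than mere gluing.

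Finally I would verify the naturality condition that promotes the element-wise existence of most general generalizations to local familial representability, namely that the generic context assigned to $x$ agrees with the one assigned to every substitution instance $x[\sigma]$. Conceptually the generic context is the rigid skeleton of $x$, the generator-headed structure that remains once every flexible subterm is replaced by a fresh variable, and substitution along $\sigma$ only instantiates those flexible variables further while leaving the skeleton untouched; hence the most general generalization of $x[\sigma]$ is obtained from that of $x$ merely by precomposing its witnessing substitution with $\sigma$. By the coproduct characterization recalled earlier, assigning one generic context $\Gamma_i$ per skeleton then exhibits each polynomial-sort presheaf as a coproduct $\coprod_i \yo(\Gamma_i)$ of representables, and combining this with the closure under dependent sums and products yields the result for all polynomial sorts. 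The decomposition is asserted only locally, over each slice $\CC/\Gamma$, which is all that is needed and all that holds: every element of a free CwF mentions only finitely many generators, so its skeleton and generalization are computed inside a finitely generated subcontext even when no single global coproduct-of-representables is available.
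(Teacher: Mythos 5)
Your first paragraph is exactly the paper's route: freeness gives first-order unification with a most general unifier and an occurs check (this is the paper's \cref{thm:mgu_flex_rigid}), and it is indeed the technical heart; your closing paragraph's ``rigid skeleton'' intuition for naturality and the slice-wise coproduct decomposition are also consistent with the paper. The gap is in the middle step, where most general generalizations are actually constructed. You organize that construction as an induction on the structure of the polynomial sort, with basic sorts as a one-step base case (``the MGG of $\iS[\sigma]$ is $(\partial\iS,\iS)$ with witness $\sigma$''). That is correct only for a closed basic sort with no binder telescope and no type constraint. But the elements that genuinely need generalizing sit under the rigid telescope $\Delta$ of a monomial sort $[\delta:\Delta \vdash A(\delta)]$, where the MGG is emphatically \emph{not} the bare generator: with generators $\inner{g} : \forall (x:\Tm(\iX))(y:\Tm(\iY(x))) \to \Tm(\iX)$ and $\inner{h} : \forall (x:\Tm(\iX)) \to \Tm(\iY(x)) \to \Tm(\iY(\inner{f_{1}}(x)))$, the type $\iY(\inner{g}(\inner{f_{1}}(x),\inner{h}(x,\overline{y})))$ over $(x:\iX,\ \overline{y}:\iY(x))$ is its own most general generalization, because pruning the subterm $\inner{f_{1}}(x)$ violates the typing of $\inner{g}$ and $\inner{h}$.

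Handling this requires a recursion on the \emph{normal form of the element}, not on the sort: each generator-headed subterm $\opf[\tau]$ has its arguments $\tau$ generalized recursively, and then the expected type must be unified with the generalized output type $T\opf[\tau_{0}]$, possibly re-instantiating freshly introduced variables. This interleaving of generalization with unification is the content of the paper's \cref{prop:free_mono_lfr}, after which \cref{prop:lfr_telescopes} disposes of the telescope (dependent sum) layer with no unification at all. Your proposal compresses this into ``the generic section is exactly the most general unifier'' of a matching problem, which conflates two different universal properties: an MGG is a terminal object in a connected component of a category of elements (a problem about \emph{one} element), whereas an MGU is a terminal cone over a \emph{pair} of parallel elements; no reduction of the former to the latter is given, and I do not believe a one-shot reduction exists --- unification enters only as a subroutine inside the structural recursion. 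Two smaller points of the same kind: termination of unification comes from the decreasing number of flexible variables (the length of $\Gamma$), not from the cellular rank of generators; and your naturality argument silently needs a dedicated case in the construction for subterms that do not depend on $\Delta$ (wholesale strengthening/pruning), since after substitution a flexible variable need not remain a variable --- without that case the assignment of generic contexts is natural only up to isomorphism, which is not enough for condition (3) of \cref{def:familially_rep}.
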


\paragraph*{Strictification of weakly stable weak type-theoretic structures}

By the small object argument, every CwF $\CC$ admits an $I$-cellular replacement, which is a freely generated CwF $\CC_{0}$ equipped with a \emph{trivial fibration} $F : \CC_{0} \tfibra \CC$.
A trivial fibration is a morphism that is surjective on types and terms; in particular it is a kind of equivalence between CwFs.
Thus every CwF $\CC$ admits an equivalent CwF $\CC_{0}$ that has familially representable polynomial sorts.
Furthermore all type and term formers can be lifted from $\CC$ to $\CC_{0}$ along $F$, except that definitional equalities cannot be lifted.

In other words, every \emph{weak} type-theoretic structure can be lifted.
A weak type-theoretic structure is a type-theoretic structure that is presented without definitional equalities.
Typically, their computation rules are specified up to typal equality, rather than up to definitional equality.
For example, weak identity types (under the name of propositional identity types) were introduced in~\cite{PropositionalIdentityTypes}.
The computation rule of the weak $\J$ eliminator is only given by a typal equality ${\Jbeta} : \Id(\J(d,x,\refl),d)$.
Similarly, we can consider weak $\Pi$-types, weak $\Sigma$-types, \etc.

We thus have two ways to weaken the usual presentation of a type-theoretic structure: we can weaken either the stability under substitution and/or the computation rules.
In general we may want to compare weakly stable, weakly computational structures with strictly stable, strictly computational structures.
As it is hard to do this comparison directly, it has to be split into multiple steps.
The present paper provides comparisons between weakly stable, weak and strictly stable, weak structures.
There is ongoing work~\cite{CoherenceStrictEqualities} by the author towards coherence theorems that compare strictly stable, weak structures with strictly stable, strict structures.

Combining the previous results, we obtain the following theorem:
\begin{restatable}{theorem}{restateLeftStrictification}\label{thm:left_strictification_id}
  Let $\CC$ be a CwF with weakly stable weak identity types.
  Then there exists a CwF $\CD$ with stable weak identity types and a trivial fibration $F : \CD \to \CC$ in $\CCwf$ that weakly preserves identity types.
\end{restatable}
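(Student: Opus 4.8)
The plan is to assemble the three preceding results into a single zigzag. First I would invoke the small object argument against the generating set $I$ of CwF morphisms to produce an $I$-cellular replacement of $\CC$: a freely generated CwF $\CC_0$ together with a trivial fibration $F : \CC_0 \tfibra \CC$. Since $F$ is a trivial fibration it is surjective on types and on terms, and by Theorem~\ref{thm:mgg_free} the freely generated CwF $\CC_0$ has locally familially representable polynomial sorts. These two facts are the only structural properties of $\CC_0$ that the rest of the argument uses.

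The next step is to transport the weakly stable weak identity types from $\CC$ up to $\CC_0$ along $F$. Here the weakness of the computation rule is essential: a weak identity-type structure consists only of operations ($\Id$, $\refl$, $\J$) together with terms witnessing the typal computation rule (such as ${\Jbeta} : \Id(\J(d,x,\refl),d)$), with no imposed definitional equalities. Using surjectivity of $F$ on types and terms, I would choose, for each piece of this data on $\CC$, a preimage on $\CC_0$; because the only coherence constraints to be met are themselves typal equalities — again mere terms, which lift in the same way — no obstruction arises. Had the structure required definitional equalities, those could not be lifted, which is precisely why the theorem is stated for \emph{weak} structures. The resulting structure on $\CC_0$ is again weakly stable, the weak-stability witnesses being lifted alongside the operations.

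It then remains to strictify the stability. Applying Theorem~\ref{thm:lfr_poly_strict_id} to $\CC_0$ — and here I would note that the construction of $T^{s}$ is purely local, depending only on the most general generalization of an element over a fixed context, so that local familial representability suffices in place of the global hypothesis stated there — yields stable weak identity types on $\CC_0$ equivalent to the lifted weakly stable ones. Setting $\CD \triangleq \CC_0$ and keeping $F : \CD \to \CC$, it only remains to check that $F$ weakly preserves identity types: this follows by composing the equivalence between the stable and the weakly stable structures on $\CD$ with the agreement between the lifted structure and the original structure on $\CC$, which holds on the nose since the lifted data were chosen as preimages under $F$.

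I expect the genuine difficulty to be concentrated in the lifting step, and specifically in formulating it uniformly. One must present the entire weak identity-type structure as a single extension of free CwFs built from $I$-cells — adjoining generic type and term formers together with their typal witnesses — so that lifting becomes an instance of the right-lifting property of the trivial fibration $F$ against that extension. Verifying that this presentation also captures weak stability, and that the lifted witnesses continue to exhibit weak stability on $\CC_0$, is where the bookkeeping lies. The passage from locally to globally familially representable sorts is a minor secondary point, resolved by the locality of the strictification construction noted above.
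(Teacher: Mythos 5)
Your proposal follows the paper's proof essentially verbatim: an $I$-cellular replacement via the small object argument, lifting of the weak identity-type structure along the trivial fibration (the paper's \cref{prop:tfib_lift_id}), strictification via \cref{thm:mgg_free} and \cref{thm:lfr_poly_strict_id}, and concluding weak preservation by composing the equivalence $\Id^{s} \simeq \Id$ on $\CD$ with the strict preservation of the lifted structure by $F$. The one place you diverge is the anticipated difficulty in the lifting step: no uniform presentation of the structure as an $I$-cellular extension with a right-lifting property is needed, since a weakly stable weak structure is just a family of types and terms indexed by introduction and elimination contexts, with no equations to preserve and with weak stability derivable (\cref{prop:ws_wsid}) rather than imposed, so the paper lifts it componentwise.
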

This theorem can straightforwardly be extended to any other weakly stable weak type-theoretic structure.

In general, we are interested in coherence theorems that are more powerful than~\cref{thm:left_strictification_id}.
We expect that~\cref{thm:left_strictification_id} can be part of the proofs of such coherence theorems; this is discussed in~\cref{sec:towards_coherence}.

\section{Background}

We work in a constructive metatheory.
\subsection{Presheaf categories}

We use the internal language of the category $\CPsh(\CC)$ of presheaves over a base category $\CC$; any presheaf category is a model of extensional type theory~\cite{SyntaxAndSemanticsDT}.
This justifies the use of higher-order abstract syntax (HOAS) to describe type-theoretic structures over a base category $\CC$.

If $\Gamma : \Ob_{\CC}$ is an object of $\CC$, the corresponding representable presheaf is written $\yo(\Gamma)$.
A morphism $f : \Gamma \to \Delta$ can be identified with the natural transformation $f : \yo(\Gamma) \to \yo(\Delta)$.

If $X$ is a presheaf, we identify global elements of the exponential presheaf $(\yo(\Gamma) \to X)$ with elements of the evaluation of $X$ at $\Gamma$.
If $x : \yo(\Gamma) \to X$ and $f : \Delta \to \Gamma$, we may write $x[f]$ for the restriction of $x$ along $f$.

We write $\int_{\CC} X$ for the category of elements of $X$; its objects are pairs $(\Gamma,x)$ with $x : \yo(\Gamma) \to X$, and a morphism $(\Delta,x') \to (\Gamma,x)$ is a morphism $\rho : \Delta \to \Gamma$ such that $x' = x[\rho]$.

A dependent presheaf over $X$ is a presheaf over $\int_{\CC} X$.
If $Y$ is a dependent presheaf over $X$ and $x : \yo(\Gamma) \to X$, global elements of the presheaf $(\gamma : \yo(\Gamma)) \to Y(x(\gamma))$ coincide with elements of the evaluation of $Y$ at $\Gamma$ and $x$.

The presheaf universe classifying the $i$-small dependent presheaves is denoted by $\UU_{i}$; we will generally omit the universe level $i$.
Dependent products are written $(a : A) \to B(a)$, sometimes with a leading $\forall$ quantifier.
Dependent sums are written $(a : A) \times B(a)$.
The terminal presheaf is denoted by $\top$.

If $x : \yo(\Gamma) \to X$ and $y : (\gamma : \yo(\Gamma)) \to Y(x(\gamma))$, we write $\angles{x,y}$ for the corresponding element of $(\gamma : \yo(\Gamma)) \to (a : X(\gamma)) \times (b : Y(a))$.
We write $\angles{}$ for the unique element of $\yo(\Gamma) \to \top$.

\subsection{Categories with Families}\label{ssec:cwfs}

We use categories with families~\cite{InternalTypeTheory,CwFsUSD} as our models of type theory.
We recall how the notion of local representability, which encodes the context extensions, is derived from the (non-local) notion of representability.
We will similarly derive a notion of local familial representability from the notion of familial representability in~\cref{ssec:fam_rep}.

\begin{definition}
  A dependent presheaf $Y : X \to \UU$ is \defemph{locally representable} if for every element $x : \yo(\Gamma) \to X$, the restricted presheaf
  \begin{alignat*}{3}
    & Y_{\mid x} && :{ } && \CPsh(\CC / \Gamma) \\
    & Y_{\mid x}(\rho : \Delta \to \Gamma) && \triangleq{ } && Y(x[\rho] : \yo(\Delta) \to X)
  \end{alignat*}
  is representable.
  \lipicsEnd{}
\end{definition}

\begin{definition}
  A \defemph{family} over a category $\CC$ is a pair $(\Ty,\Tm)$ consisting of a presheaf $\Ty : \UU$ and of a dependent presheaf $\Tm : \Ty \to \UU$.
  We say that the family has \defemph{representable elements} when $\Tm$ is locally representable.
  \lipicsEnd{}
\end{definition}

\begin{definition}
  A \defemph{category with families} (CwF) is a category $\CC$ equipped with a terminal object $\diamond$, along with a global family $(\Ty_{\CC},\Tm_{\CC})$ with representable elements.
  \lipicsEnd{}
\end{definition}

The local representability condition describes the context extensions.
If $\Gamma : \Ob_{\CC}$ and $A : \yo(\Gamma) \to \Ty_{\CC}$, we have an extended context $\Gamma.A : \Ob_{\CC}$ and a natural isomorphism $\yo(\Gamma.A) \simeq (\gamma : \yo(\Gamma)) \times (a : \Tm_{\CC}(A(\gamma)))$.
We will often identify the two sides of this isomorphism.
The two projections out of this dependent sum are the projection morphism $\bm{p}_{A} : \Gamma.A \to \Gamma$ and the variable term $\bm{q}_{A} : ((\gamma,a) : \yo(\Gamma.A)) \to \Tm_{\CC}(A(\gamma))$.
If $\rho : \Delta \to \Gamma$, we write $\rho^{+}$ for the canonical morphism $\rho^{+} : \Delta.A[\rho] \to \Gamma.A$, \ie{} $\rho^{+} = \angles{\rho \circ \bm{p}_{A}, \bm{q}_{A}}$.

We write $\CCwf$ for the $1$-category of CwFs and strict CwF morphisms.

We write $(\Ty^{\star}, \Tm^{\star})$ for the family of \defemph{telescopes} of a family $(\Ty,\Tm)$.
It is defined as the following inductive-recursive family, internally to $\CPsh(\CC)$:
\begin{alignat*}{3}
  & \Ty^{\star} && :{ } && \UU \\
  & \Tm^{\star} && :{ } && \Ty^{\star} \to \UU \\
  & \diamond && :{ } && \Ty^{\star} \\
  & \Tm^{\star}(\diamond) && \triangleq{ } && \top \\
  & \_{}. \_{} && :{ } && (\Delta : \Ty^{\star}) (A : \Tm^{\star}(\Delta) \to \Ty) \to \Ty^{\star} \\
  & \Tm^{\star}(\Delta. A) && \triangleq{ } && (\delta : \Tm^{\star}(\Delta)) \times (a : \Tm(A(\delta)))
\end{alignat*}

In other words, a telescope of types $A : \Ty^{\star}$ is a finite sequence $A_{1}.A_{2}.\cdots.A_{n}$ of dependent types.
A telescope of terms $a : \Tm^{\star}(A)$ is a sequence $a_{1} : A_{1}$, $a_{2} : A_{2}(a_{1})$, \dots, $a_{n} : A_{n}(a_{1},a_{2},\dotsc)$ of terms.
If $(\Ty,\Tm)$ has representable elements, then so does $(\Ty^{\star},\Tm^{\star})$; the context extensions of $(\Ty^{\star},\Tm^{\star})$ are iterations of the context extensions of $(\Ty,\Tm)$.

There is a canonical map $\Ty^{\star}_{\CC} \to \Ob_{\CC}$ sending any closed telescope to the corresponding extension of the empty context.
We say that $\CC$ is \defemph{contextual} when that map is bijective.
In that case, we identify the objects of $\CC$ and the closed telescopes.
Up to that identification, the Yoneda embedding $\yo : \Ob_{\CC} \to \UU$ coincides with the restriction of $\Tm^{\star}_{\CC} : \Ty^{\star}_{\CC} \to \UU$ to closed telescopes.

\begin{definition}
  If $\CC$ is a contextual CwF, we characterize its variables by an inductive family $\Var : (\Gamma : \Ob_{\Gamma})(A : \yo(\Gamma) \to \Ty_{\CC})(a : \forall \gamma \to \Tm_{\CC}(A(\gamma))) \to \SSet$, generated by:
  \begin{mathpar}
    \inferrule{{}}{\Var_{\Gamma.A,A[\bm{p}_{A}]}(\bm{q}_A)}

    \inferrule{\Var_{\Gamma,A}(x)}{\Var_{\Gamma.B,A[\bm{p}_{B}]}(x[\bm{p}_{B}])}
  \end{mathpar}
\end{definition}

\subsection{Strictly stable and weakly stable weak identity types}

We give definitions of the structures of stable and weakly stable weak identity types using the internal language of $\CPsh(\CC)$.
Note that the weakly stable structures cannot be fully be specified internally; it involves an external quantification over contexts.

We use Paulin-Mohring's variant of the identity type elimination principle, as it is better behaved than Martin-Löf's eliminator in the absence of other type-theoretic structures.
In the absence of $\Pi$-types, Martin-Löf's eliminator needs to be parametrized by an additional telescope, as introduced by Gambino and Garner~\cite{IdentityTypeWFS}.
This is discussed in more details in~\cite{NorthIdentityTypes,IsaevIndexedTypeTheories,CoherenceStrictEqualities}.

Paulin-Mohring's eliminator corresponds to based path induction, in which the left endpoint of a path is fixed.
\begin{mathpar}
  \inferrule{A\ \type \\ x : A}{[y:A]\ \Id(A,x,y)\ \type}

  \inferrule{A\ \type \\ x : A}{\refl(A,x) : \Id(A,x,x)}

  \inferrule{
    A\ \type \\
    x : A
    \\\\
    [y:A, p:\Id(A,x,y)]\ P(y,p)\ \type \\
    d : P(x,\refl(A,x))
  }
  {[y:A, p:\Id(A,x,y)]\ \J(A,x,P,d,y,p) : P(y,p)}
\end{mathpar}

We consider \emph{weak} identity types, which means that their computation rule is given by a typal equality, rather than a definitional equality.
\begin{mathpar}
  \inferrule{
    A\ \type \\
    x : A
    \\\\
    [y:A, p:\Id(A,x,y)]\ P(y,p)\ \type \\
    d : P(x,\refl(A,x))
  }
  {\Jbeta(A,x,P,d,y,p) : \Id(P(x,\refl(A,x)), \J(A,x,P,d,x,\refl(A,x)), d)}
\end{mathpar}

Note that the type former $\Id$ has two parameters ($A$ and $x$) and one index $y$.
The fact that $y$ is an index cannot be seen in the definition of the stable type-former $\Id$ as a natural transformation $\Id : (A : \Ty_{\CC})(x,y : \Tm_{\CC}(A)) \to \Ty_{\CC}$.
However it changes the definition of the weakly stable type-former $\Id$; we will have a type $\Id_{\Gamma,A,x}$ in the extended context $\Gamma.(y:A)$.

\begin{definition}
  A (strictly stable) \defemph{weak identity type structure} on a family $(\Ty,\Tm)$ consists of an \defemph{introduction structure}
  \begin{alignat*}{3}
    & \Id && :{ } && \forall (A : \Ty) (x,y : \Tm(A)) \to \Ty, \\
    & \refl && :{ } && \forall A\ x \to \Tm(\Id(A,x,x)),
  \end{alignat*}
  along with a \defemph{weak elimination structure}
  \begin{alignat*}{3}
    & \J && :{ } && \forall (A : \Ty) (x : \Tm(A)) \\
    &&&&& \phantom{\forall} (P : \forall (y : \Tm(A)) (p : \Tm(\Id(A,x,y))) \to \Ty) \\
    &&&&& \phantom{\forall} (d : \Tm(P(x,\refl(A,x)))) \\
    &&&&& \to \forall y\ p \to \Tm(P(y,p)), \\
    & \Jbeta && :{ } && \forall (A : \Ty) (x : \Tm(A)) \\
    &&&&& \phantom{\forall} (P : \forall (y : \Tm(A)) (p : \Tm(\Id(A,x,y))) \to \Ty) \\
    &&&&& \phantom{\forall} (d : \Tm(P(x,\refl(A,x)))) \\
    &&&&& \to \Tm(\Id(P(x,\refl(A,x)), \J(A,x,P,d,x,\refl(A,x)), d)).
          \tag*{\lipicsEnd{}}
  \end{alignat*}
\end{definition}

We also define the weakly stable weak identity types.
\begin{definition}
  A \defemph{$\Id$-introduction context} is a triple $(\Gamma,A,x)$, where
  \begin{alignat*}{3}
    & \Gamma && :{ } && \Ob_{\CC}, \\
    & A && :{ } && \yo(\Gamma) \to \Ty, \\
    & x && :{ } && (\gamma : \yo(\Gamma)) \to \Tm(A(\gamma)).
  \end{alignat*}
  Here $\Gamma$ is an object of $\CC$, and $A$ and $x$ are types and terms that only depend on $\Gamma$.

  A \defemph{weakly stable identity type introduction structure} consists, for every $\Id$-introduction context $(\Gamma,A,x)$, of operations
  \begin{alignat*}{3}
    & \Id_{(\Gamma,A,x)} && :{ } && \forall (\gamma : \yo(\Gamma)) (y : \Tm(A(\gamma))) \to \Ty, \\
    & \refl_{(\Gamma,A,x)} && :{ } && \forall (\gamma : \yo(\Gamma)) \to \Tm(\Id_{(\Gamma,A,x)}(\gamma,x(\gamma))).
  \end{alignat*}

  A \defemph{$\Id$-elimination context} over an $\Id$-introduction context $(\Gamma,A,x)$ is a tuple $(\Delta,\gamma,P,d)$, where
  \begin{alignat*}{3}
    & \Delta && :{ } && \Ob_{\CC}, \\
    & \gamma && :{ } && \Delta \to \Gamma, \\
    & P && :{ } && \forall (\delta : \yo(\Delta)) (y : \Tm(A(\gamma(\delta)))) (p : \Tm(\Id_{(\Gamma,A,x)}(\gamma(\delta), y))) \to \Ty, \\
    & d && :{ } && \forall (\delta : \yo(\Delta)) \to \Tm(P(\delta, x(\gamma(\delta)), \refl_{(\Gamma,A,x)}(\gamma(\delta)))).
  \end{alignat*}

  A \defemph{weakly stable identity type elimination structure} consists, for every $\Id$-elimination context $(\Delta,\gamma,P,d)$ over $(\Gamma,A,x)$, of operations
  \begin{alignat*}{3}
    & \J_{(\Gamma,A,x,\Delta,\gamma,P,d)} && :{ }
    && \forall (\delta : \yo(\Delta)) (y : \Tm(A(\gamma(\delta)))) (p : \Tm(\Id_{(\Gamma,A,x)}(\gamma(\delta), y))) \to \Tm(P(\delta,y,p)), \\
    & \Jbeta_{(\Gamma,A,x,\Delta,\gamma,P,d)} && :{ }
    && \forall (\delta : \yo(\Delta)) \to
       \Id_{(\Delta,P',d)}
       (\delta, \J_{(\Gamma,A,x,\Delta,\gamma,P,d)}(\delta, x(\gamma(\delta)), \refl_{(\Gamma,A,x)}(\gamma(\delta)))), \\
    & P'(\delta') && \triangleq{ }
    && P(\delta', x(\gamma(\delta')), \refl_{\Gamma}(\gamma(\delta'))).
       \tag*{\lipicsEnd{}}
  \end{alignat*}
\end{definition}

Note that strictly stable identity types are weakly stable identity types satisfying additional naturality conditions.
In presence of weakly stable weak identity types, we have well-behaved notions of contractible types, type equivalences, \etc{}

\begin{proposition}\label{prop:ws_wsid}
  The weakly stable weak identity types are indeed \emph{weakly stable}: for every $\Id$-introduction context $(\Gamma,A,x)$ and substitution $\rho : \Delta \to \Gamma$, the canonical map
  \begin{alignat*}{3}
    & \Tm(\Id_{(\Delta,A[\rho],x[\rho])}) && \to{ } && \Tm(\Id_{(\Gamma,A,x)}[\rho])
  \end{alignat*}
  is an equivalence over $\Delta.A[\rho]$.
  \qed{}
\end{proposition}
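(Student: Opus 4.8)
The plan is to exhibit a fiberwise quasi-inverse to the canonical map using based path induction, relying only on the weak identity type structure (its eliminators together with the weak computation rules $\Jbeta$), so that no $\Pi$- or $\Sigma$-types are needed. Write $C \triangleq \Id_{(\Delta,A[\rho],x[\rho])}$ and $D \triangleq \Id_{(\Gamma,A,x)}[\rho]$; both are families $(y : \Tm(A[\rho](\delta))) \to \Ty$ over $\Delta.A[\rho]$, pointed at $x[\rho]$ by $c_{0} \triangleq \refl_{(\Delta,A[\rho],x[\rho])}$ and by $d_{0}(\delta) \triangleq \refl_{(\Gamma,A,x)}(\rho(\delta))$ respectively (the latter is well-typed since $\Id_{(\Gamma,A,x)}[\rho](\delta,x[\rho](\delta)) = \Id_{(\Gamma,A,x)}(\rho(\delta),x(\rho(\delta)))$). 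The first observation is that both families support based path induction at $x[\rho]$: for $C$ this is the weakly stable elimination structure instantiated at the $\Id$-elimination context with $\Delta' = \Delta$ and $\gamma = \id_{\Delta}$, while for $D$ it is the elimination structure for $(\Gamma,A,x)$ instantiated at the $\Id$-elimination context with $\Delta' = \Delta$ and $\gamma = \rho$. In particular the canonical comparison map $f : C \to D$ is the one produced by the eliminator for $C$ with motive $P(\delta,y,p) \triangleq \Id_{(\Gamma,A,x)}(\rho(\delta),y)$, sending $c_{0}$ to $d_{0}$.

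Next I would define a candidate inverse $g : D \to C$ by the eliminator for $D$, sending $d_{0}$ to $c_{0}$, and then show that $g \circ f$ and $f \circ g$ are fiberwise homotopic to the respective identities. For $g \circ f \sim \id_{C}$, I apply based path induction for $C$ with the motive $Q(\delta,y,p) \triangleq \Id(C(\delta,y),\, g(f(p)),\, p)$; it then suffices to produce one path $g(f(c_{0})) = c_{0}$ in $\Tm(C(x[\rho]))$. Such a path is assembled from the weak computation rules: $\Jbeta$ for $f$ gives $\beta_{f} : f(c_{0}) = d_{0}$ in $\Tm(D(x[\rho]))$, and concatenating $\ap_{g}(\beta_{f}) : g(f(c_{0})) = g(d_{0})$ with $\beta_{g} : g(d_{0}) = c_{0}$ (the $\Jbeta$ for $g$) yields the desired path. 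The homotopy $f \circ g \sim \id_{D}$ is entirely symmetric. Since each type $C(\delta,y)$ and $D(\delta,y)$ carries its own weakly stable weak identity types (taking the evident introduction contexts, \eg{} $(\Delta.A[\rho].C,\, C,\, g(f(p)))$ for $Q$), the iterated identity types forming the motives, as well as the derived operations $\ap_{g}$ and path concatenation, are all available.

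Having a fiberwise quasi-inverse, $f$ is a fiberwise equivalence, that is, an equivalence over $\Delta.A[\rho]$, which is the claim. The only point requiring care is the bookkeeping with the weak computation rules: because $\Jbeta$ supplies typal rather than definitional equalities, the verification that $f(c_{0})$ and $d_{0}$ coincide must be threaded through $\ap$ and path concatenation instead of a strict equation, and one must check that each eliminator instance invoked is genuinely an instance of the given weakly stable elimination structures (at the elimination contexts with $\gamma = \id_{\Delta}$ and $\gamma = \rho$). I expect this to be the main, though essentially routine, obstacle; everything else is the standard argument that two families satisfying based path induction with matching basepoints are equivalent, carried out internally in $\CPsh(\CC)$ for a fixed introduction context and fixed $\rho$ so that the non-naturality of the weakly stable structure never intervenes.
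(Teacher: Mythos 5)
The paper states this proposition without proof (the terminal \qed{} marks it as a standard fact), so there is no official argument to compare against; your proposal correctly supplies the argument the authors leave implicit, and it is the expected one. Both $C = \Id_{(\Delta,A[\rho],x[\rho])}$ and $D = \Id_{(\Gamma,A,x)}[\rho]$ are families over $\Delta.A[\rho]$ pointed at $x[\rho]$ and each satisfies based path induction — your instantiations of the elimination contexts (with $\gamma = \id_{\Delta}$ for $C$ and $\gamma = \rho$ for $D$) are well-typed exactly as you claim — and two such pointed families are fiberwise equivalent via the $\refl$-preserving comparison maps, with the round-trip homotopies obtained by a further path induction whose base case is assembled from the $\Jbeta$ cells. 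The one point you rightly flag as the real cost is also the only delicate one: $\beta_{f}$, $\ap_{g}(\beta_{f})$ and $\beta_{g}$ live in \emph{differently based} instances of the weakly stable identity types (based at $d_{0}$, $g(d_{0})$ and $c_{0}$ respectively, while the motive $Q$ demands a path based at $g(f(c_{0}))$), so the symmetry, concatenation and transport operations used to combine them must themselves be derived by path induction and threaded across these instances; this is routine but genuinely necessary bookkeeping, and it is the same style of argument used for the analogous lemmas in the local universes paper~\cite{LocalUniversesModel}. Since $\rho$ and the introduction context are fixed throughout, non-naturality of the weakly stable structure never enters, as you note, so the proof goes through.
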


\begin{definition}\label{def:preserves_wsid}
  A CwF morphism $F : \CC \to \CD$ weakly preserves weakly stable weak identity types if for every $\Id$-introduction context $(\Gamma,A,x)$ of $\CC$, then the canonical map
  \[ \Tm_{\CD}(\Id_{(F(\Gamma),F(A),F(x))}) \to \Tm_{\CD}(F(\Id_{(\Gamma,A,x)})) \]
  is an equivalence over $F(\Gamma.A)$.
  \qed{}
\end{definition}

\subsection{Trivial fibrations and freely generated CwFs}

We recall the definition of the (cofibrations, trivial fibrations) weak factorization system on $\CCwf$.
The same weak factorization system on the category $\mathbf{CwA}$ of Categories with Attributes, which is equivalent to $\CCwf$, was introduced by Kapulkin and Lumsdaine~\cite[Definition~{4.12}]{HomotopyTheoryTTs}.

\begin{definition}\label{def:trivial_fibration}
  A morphism $F : \CC \to \CD$ of CwFs is a \defemph{trivial fibration} if its actions on types and terms are surjective, \ie{} if it satisfies the following lifting conditions:
  \begin{description}
  \item[(type lifting)] For every object $\Gamma : \Ob_{\CC}$ and type $A : \yo(F(\Gamma)) \to \Ty_{\CD}$, there exists a type $A_{0} : \yo(\Gamma) \to \Ty_{\CC}$ such that $F(A_{0}) = A$.
  \item[(term lifting)] For every object $\Gamma : \Ob_{\CC}$, type $A : \yo(\Gamma) \to \Ty_{\CC}$ and term $a : (\gamma : \yo(F(\Gamma))) \to \Tm_{\CD}(F(A)(\gamma))$, there exists a term $a_{0} : (\gamma : \yo(\Gamma)) \to \Tm_{\CC}(A(\gamma))$ such that $F(a_{0}) = a$,
  \end{description}
  where the existential quantifications are strong, meaning that $F$ is equipped with a choice of lifts.
  \lipicsEnd{}
\end{definition}

The (cofibrations, trivial fibrations) weak factorization system on $\CCwf$ is cofibrantly generated by the set $I = \{I^{\ty}, I^{\tm}\}$, where
\begin{mathpar}
  I^{\ty} : \Free(\bm{\Gamma} : \Ob) \to \Free(\bm{A} : \yo\bm{\Gamma} \to \Ty),

  I^{\tm} : \Free(\bm{A} : \yo\bm{\Gamma} \to \Ty) \to \Free(\bm{a} : (\gamma : \yo\bm{\Gamma}) \to \Tm(\bm{A}(\gamma))).
\end{mathpar}
Here $\Free(\bm{\Gamma} : \Ob)$ is the CwF freely generated by an object $\bm{\Gamma}$,
$\Free(\bm{A} : \yo\bm{\Gamma} \to \Ty)$ is the CwF freely generated by an object $\bm{\Gamma}$ and a type $\bm{A}$ over $\Gamma$, and
$\Free(\bm{a} : (\gamma : \yo\bm{\Gamma}) \to \Tm(\bm{A}(\gamma)))$ is the CwF freely generated by $\bm{\Gamma}$, $\bm{A}$ and a term $\bm{a}$ of type $\bm{A}$ over $\bm{\Gamma}$.

We also recall the definition of $I$-cellular maps and objects in $\CCwf$.
\begin{definition}\label{def:cellular}
  A \defemph{basic $I$-cellular map} $\CC \to \CD$ is a pushout of a coproducts of maps in $I$; it freely adjoins to a model $\CC$ a collection of new types and terms whose contexts and types are from $\CC$.
  An \defemph{$I$-cellular map} is a sequential composition of a sequence ${(\iota_{i}^{i+1} : \CC_{i} \to \CC_{i+1})}_{i\le\omega}$ of basic $I$-cellular maps.

  A CwF $\CC$ is an \defemph{$I$-cellular object} (or \defemph{$I$-cell complex}) if the unique map $\Init \to \CC$ is an $I$-cellular map.
  \lipicsEnd{}
\end{definition}

By the small object argument, every morphism of CwFs can be factored as an $I$-cellular map followed by a trivial fibration.
In particular, for any CwF $\CC$, the factorization of the unique map $\Init \to \CC$ provides an $I$-cellular object $\CC_{0}$ and a trivial fibration $\CC_{0} \to \CC$.

\begin{proposition}\label{prop:tfib_lift_id}
  If $F : \CC \to \CD$ is a trivial fibration between CwFs and $\CD$ is equipped with weakly stable weak identity types, then $\CC$ can be equipped with weakly stable weak identity types that are strictly preserved by $F$.
\end{proposition}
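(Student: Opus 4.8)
The plan is to transport the identity type structure from $\CD$ to $\CC$ componentwise, using the chosen lifts that come packaged with the trivial fibration $F$. Since a weakly stable weak identity type structure is merely an externally-indexed family of operations, carrying no naturality or coherence conditions, it suffices to produce, for each $\Id$-introduction context and each $\Id$-elimination context of $\CC$, operations of the prescribed types; strict preservation by $F$ will then hold by construction, and weak stability of the result will follow from \cref{prop:ws_wsid} applied to $\CC$, so there is nothing further to check.

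First I would note that, because $F$ is a strict CwF morphism, it sends $\Id$-introduction and $\Id$-elimination contexts of $\CC$ to such contexts of $\CD$: a triple $(\Gamma,A,x)$ is sent to $(F(\Gamma),F(A),F(x))$, and $F$ commutes strictly with the terminal object, context extension, substitution, and application used to assemble these contexts. I would then define the introduction structure on $\CC$ by lifting. Given $(\Gamma,A,x)$, the type $\Id_{(F\Gamma,FA,Fx)}$ of $\CD$ lives over $F(\Gamma.A)=F(\Gamma).F(A)$, so type lifting yields $\Id_{(\Gamma,A,x)}$ over $\Gamma.A$ with $F(\Id_{(\Gamma,A,x)})=\Id_{(F\Gamma,FA,Fx)}$. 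For $\refl$, the type $\Id_{(\Gamma,A,x)}(-,x(-))$ over $\Gamma$ has $F$-image exactly $\Id_{(F\Gamma,FA,Fx)}(-,Fx(-))$, the type of $\refl_{(F\Gamma,FA,Fx)}$, because $F$ preserves substitution and the just-defined $\Id$; term lifting then produces $\refl_{(\Gamma,A,x)}$ with $F(\refl_{(\Gamma,A,x)})=\refl_{(F\Gamma,FA,Fx)}$.

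Next I would treat the elimination structure, respecting the dependency order: the type of $\Jbeta$ mentions both $\J$ and an instance $\Id_{(\Delta,P',d)}$ of the introduction structure, so $\Id$, $\refl$, and $\J$ on $\CC$ must all be in place before $\Jbeta$ is defined. Given an $\Id$-elimination context $(\Delta,\gamma,P,d)$ over $(\Gamma,A,x)$, its $F$-image is an elimination context over $(F\Gamma,FA,Fx)$, where the types of $FP$ and $Fd$ are correct precisely because $F$ strictly preserves $\Id$ and $\refl$. The operation $\J_{(F\Gamma,FA,Fx,F\Delta,F\gamma,FP,Fd)}$ is a term of type $FP$ over $F$ of the extended context $\Delta.(y : A[\gamma]).(p : \Id_{(\Gamma,A,x)})$, so term lifting over that $\CC$-context gives $\J$ with the expected $F$-image. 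Finally, since $F$ now strictly preserves $\Id$, $\refl$, and $\J$, the $F$-image of the type $\Id_{(\Delta,P',d)}(-,\J(-,x(\gamma\,-),\refl(\gamma\,-)))$ over $\Delta$ coincides on the nose with the type of $\Jbeta_{(F\Gamma,FA,Fx,F\Delta,F\gamma,FP,Fd)}$; one last application of term lifting produces $\Jbeta$ with $F(\Jbeta)=\Jbeta$.

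There is no deep obstacle here; the argument is essentially bookkeeping. The two points demanding care are (i) respecting the dependency order, so that the type of each operation is already well-formed in $\CC$ at the moment we lift it, and (ii) verifying in each case that the $F$-image of the intended $\CC$-type agrees strictly with the type of the corresponding $\CD$-operation. Both reduce to the strictness of $F$ as a CwF morphism together with the preservation of the operations transported at earlier stages, so the construction goes through without difficulty.
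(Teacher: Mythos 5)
Your proposal is correct and is exactly the paper's approach: the paper's proof is the one-liner ``By lifting each component of the weakly stable weak identity types of $\CD$,'' and your argument is a careful elaboration of that componentwise lifting, including the two bookkeeping points (dependency order and strict agreement of $F$-images of types) that make it go through.
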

\begin{proof}
  By lifting each component of the weakly stable weak identity types of $\CD$.
\end{proof}

\begin{proposition}
  Any $I$-cellular CwF is contextual.
\end{proposition}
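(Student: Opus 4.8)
The plan is to induct on the cellular structure of the generating map $\Init \to \CC$. Since $\CC$ is $I$-cellular, the map $\Init \to \CC$ exhibits $\CC$ as the sequential colimit of a chain $\CC_{0} \to \CC_{1} \to \cdots$ of basic $I$-cellular maps $\iota_{i}^{i+1} : \CC_{i} \to \CC_{i+1}$ with $\CC_{0} = \Init$. It then suffices to prove three things: that $\Init$ is contextual, that a basic $I$-cellular map out of a contextual CwF has contextual codomain, and that contextuality is preserved under the sequential colimit. The base case is immediate: $\Init$ has a single object $\diamond$ and no types, so its only closed telescope is the empty one, and $\Ty^{\star}_{\Init} \to \Ob_{\Init}$ is a bijection. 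For the limit case, recall that CwFs are the models of a generalized algebraic theory~\cite{CartmellGATs}, so filtered colimits in $\CCwf$ are computed on the underlying sorts; in particular $\Ob_{\CC} = \colim_{i} \Ob_{\CC_{i}}$. As any telescope is a finite datum, it already occurs at a finite stage, so likewise $\Ty^{\star}_{\CC} = \colim_{i} \Ty^{\star}_{\CC_{i}}$, compatibly with the canonical maps $\Ty^{\star}_{\CC_{i}} \to \Ob_{\CC_{i}}$. A sequential colimit of bijections is a bijection, so $\CC$ is contextual as soon as every $\CC_{i}$ is.

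The successor case is the heart of the argument. Let $\CC \to \CD$ be a basic $I$-cellular map with $\CC$ contextual; it is a pushout of a coproduct of maps in $I$. The $I^{\tm}$-components adjoin only new terms of types already present in $\CC$: they add no new types over any context, hence neither new objects nor new closed telescopes, and so leave the map $\Ty^{\star} \to \Ob$ unchanged. It therefore remains to treat the $I^{\ty}$-components, which adjoin, for each $j$ in some index set, a new type $\bm{A}_{j}$ over a context $\Gamma_{j}$ of $\CC$; by contextuality of $\CC$, each $\Gamma_{j}$ is uniquely a closed telescope. I would use the explicit description of the pushout $\CD$ as the CwF freely generated over $\CC$ by these new types: every type of $\CD$ over a context $\Xi$ is either a type of $\CC$ transported along the structure map or a substitution instance of some $\bm{A}_{j}$, and every object of $\CD$ is reachable from $\diamond$ by iterated context extension. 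Combined with contextuality of $\CC$, this yields surjectivity of $\Ty^{\star}_{\CD} \to \Ob_{\CD}$: each object is a telescope whose entries are transported $\CC$-types or instances of the $\bm{A}_{j}$.

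Injectivity — the uniqueness of the telescope decomposition — is where I expect the genuine work to lie, and it is precisely the place where freeness is indispensable. The goal is to show that in the freely generated CwF $\CD$ the context-extension operation $(\Xi, T) \mapsto \Xi.T$ is injective with recoverable head, so that $\Xi.S = \Xi'.S'$ forces $\Xi = \Xi'$ and $S = S'$, after which an induction on telescope length closes the argument. Establishing this cleanly requires committing to a concrete inductive presentation of the free CwF (equivalently, of the pushout) and deriving the no-confusion property of its object former from the corresponding induction principle; this is the main technical obstacle, since it is exactly the absence of any imposed equations that must be leveraged. Once such a presentation is fixed, both the generation-by-context-extension used for surjectivity and the no-confusion property used for injectivity fall out of its induction principle, completing the successor step and hence the proof.
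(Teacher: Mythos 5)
There are two genuine gaps here. First, your dismissal of the $I^{\tm}$-components is incorrect: freely adjoining a term \emph{does} create new types and new objects. For example, if $\CC$ contains generating types $\iX : \Ty$ and $\iY : \Tm(\iX) \to \Ty$ and you adjoin a term $\bm{x} : \Tm(\iX)$ over $\diamond$, then the new substitution $\angles{\angles{},\bm{x}} : \diamond \to \diamond.\iX$ produces the new type $\iY(\bm{x})$ over $\diamond$, and hence the new object $\diamond.\iY(\bm{x})$. Types are closed under substitution, and substitution instances of old types along the newly generated substitutions are genuinely new elements; so the map $\Ty^{\star} \to \Ob$ does not remain ``unchanged,'' and the $I^{\tm}$ case requires exactly the same analysis as the $I^{\ty}$ case. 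Second, and more seriously, the heart of your argument---surjectivity and the no-confusion property of context extension in the pushout---is deferred to ``a concrete inductive presentation'' from which it is asserted to ``fall out.'' That is precisely where the proof lives, and it does not come for free: a pushout in $\CCwf$ is a free model of a generalized algebraic theory that \emph{has} equations (the CwF laws themselves: functoriality of substitution, naturality, the context-extension laws), and free models of theories with equations do not automatically have injective operations. One would genuinely have to build normal forms for \emph{objects} (the paper's \cref{constr:normal_forms} covers types, terms and telescopes of terms, so it would need extending) or find some other device; until then the successor step restates the goal rather than proving it.

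The paper avoids all of this by a different route worth internalizing: contextuality of a CwF $\CC$ is equivalent to the existence of a \emph{unique} CwF morphism $\CC \to \Nat$, where $\Nat$ is the terminal contextual CwF (its contexts are natural numbers, with a unique type and a unique term over every context). For an $I$-cellular $\CC$, a morphism out of $\CC$ is determined freely by the images of the generating types and terms, and $\Nat$ leaves no choice for these images; hence the unique morphism exists and $\CC$ is contextual. This converts the syntactic no-confusion problem your plan runs into---which is real work, essentially a normalization argument---into a one-line application of the universal property of a free object. That universal property is exactly the leverage that ``freely generated, no equations'' buys, and it is the leverage your proposal identifies as indispensable but never actually obtains.
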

\begin{proof}
  Let $\Nat$ be the terminal contextual CwFs; its contexts are natural numbers, and it has a unique type and a unique term over every context.
  A CwF $\CC$ is contextual if and only there exists a unique CwF morphism $\CC \to \Nat$; such a morphism gives the length of every context of $\CC$.

  Now take an $I$-cellular CwF $\CC$.
  For any other CwF $\CD$, a CwF morphism $\CC \to \CD$ is determined by the image of the generating types and terms of $\CC$.
  Since $\Nat$ has a unique type and a unique term, there exists a unique CwF morphism $\CC \to \Nat$, sending each generating type or term to the unique type or term of $\Nat$.
  Thus $\CC$ is contextual, as needed.
\end{proof}

The collections of generating types and terms of an $I$-cellular CwF $\CC$ can be obtained from the decomposition of $\Init \to \CC$ as an $I$-cellular map.
We use a ($\inner{red}$,$\inner{bold}$) font to distinguish the generating types and terms from arbitrary types and terms.
\begin{construction}\label{constr:free_gens}
  Let $\CC$ be an $I$-cellular CwF.
  Then we construct sets $\GenTy_{\CC} : \SSet$ of \defemph{generating types} and $\GenTm_{\CC} : \SSet$ of \defemph{generating terms} such that
  \begin{itemize}
  \item For every $\iS : \GenTy_{\CC}$, we have an object $\partial \iS : \Ob_{\CC}$ and a dependent type $\iS : \partial \iS \to \Ty_{\CC}$.
  \item For every $\opf : \GenTm_{\CC}$, we have an object $\partial \opf : \Ob_{\CC}$, a type $T \opf : \partial \opf \to \Ty_{\CC}$
    and a dependent term $\opf : \forall (\tau : \partial \opf) \to \Tm_{\CC}(T \opf(\tau))$.
  \end{itemize}
  The components $\partial \iS$ and $\partial \opf$ specify the \emph{dependencies} (or the \emph{boundary}) of the generating types and terms.
  The component $T \opf$ gives the output type of a generating term.
\end{construction}
\begin{proof}[Construction]
  Since $\CC$ is $I$-cellular, it is the colimit of a sequence
  \[ {(\iota_{i}^{i+1} : \CC_{i} \to \CC_{i+1})}_{i<\omega}
  \] of basic $I$-cellular maps, with $\CC_{0} = \Init_{\CCwf}$ and $\CC_{\omega} = \CC$.
  When $i \le j \le \omega$, we write $\iota_{i}^{j}  : \CC_{i} \to \CC_{j}$ for the composition of maps of that sequence.

  For each $i \le \omega$, the map $\iota_{i}^{i+1} : \CC_{i} \to \CC_{i+1}$ is a basic $I$-cellular map, specified by a set $\GenTy_{i}$ of generating types and a set $\GenTm_{i}$ of generating terms.
  For every $\bm{S} : \GenTy_{i}$, we have a boundary $\partial \bm{S} : \Ob_{\CC_{i}}$ and a generating type $\bm{S} : \yo(\iota_{i}^{i+1}(\partial \bm{S})) \to \Ty_{\CC_{i+1}}$.
  For every $\bm{f} : \GenTm_{i}$, we have a boundary $\partial \bm{f} : \Ob_{\CC_{i}}$, an output type $T \bm{f} : \yo(\partial \bm{f}) \to \Ty_{\CC_{i}}$ and a generating term $\bm{a} : (\gamma: \yo(\iota_{i}^{i+1}(\partial \bm{f}))) \to \Tm_{\CC_{i+1}}(\iota_{i}^{i+1}(T \bm{f})(\gamma))$.
  A morphism $F : \CC_{i+1} \to \CE$ is uniquely determined by the composition $F \circ \iota_{i}^{i+1}$ and by the image of the generating types and terms.

  We pose $\GenTy_{\CC} \triangleq \coprod\limits_{i<\omega} \GenTy_{i}$ and $\GenTm_{\CC} \triangleq \coprod\limits_{i<\omega} \GenTm_{i}$.
  The boundaries and output types of $\GenTy_{\CC}$  and $\GenTm_{\CC}$ are defined in the evident way using the boundaries and output types of $\GenTy_{i}$ and $\GenTm_{i}$.
\end{proof}

We can obtain an syntactic description of the general types and terms of an $I$-cellular CwF as the well-typed trees built out of the generating types and terms.
\begin{construction}\label{constr:normal_forms}
  Given an object $\Gamma : \Ob_{\CC}$, we define inductive families of sets
  \begin{alignat*}{3}
    & \NfTy && :{ } && \forall \Delta\ (\yo(\Gamma) \to \Ty_{\CC}) \to \SSet, \\
    & \Nf_{\Gamma}^{\star} && :{ } && \forall \Delta\ (\yo(\Gamma) \to \Tm_{\CC}^{\star}(\Delta)) \to \SSet, \\
    & \Nf_{\Gamma} && :{ } && \forall A\ (\yo(\Gamma) \to \Tm_{\CC}(A)) \to \SSet,
  \end{alignat*}
  generated by the following (unnamed) constructors:
  \begin{mathpar}
    \inferrule{\iS : \GenTm_{\CC} \\ \Nf^{\star}_{\Gamma}(\tau)}{\NfTy_{\Gamma}(\iS[\tau])}
    \\
    \inferrule{{}}{\Nf_{\Gamma}^{\star}(\angles{})}

    \inferrule{\Nf_{\Gamma}^{\star}(\delta) \\ \Nf_{\Gamma}(a)}{\Nf_{\Gamma}^{\star}(\angles{\delta,a})}
    \\
    \inferrule{\Var_{\Gamma}(a)}{\Nf_{\Gamma}(a)}

    \inferrule{\opf : \GenTm_{\CC} \\ \Nf^{\star}_{\Gamma}(\tau)}{\Nf_{\Gamma}(\opf[\tau])}
  \end{mathpar}
  Then for every type $A$, substitution $\sigma$ or term $a$, there is a unique element of $\NfTy(A)$, $\Nf^{\star}(\sigma)$ or $\Nf(a)$.
  In other words, types, terms and telescopes of terms admit a unique normal form.
  \lipicsEnd{}
\end{construction}
\begin{proof}[Construction]
  This is a standard normalization proof, although it is easier than usual thanks to the absence of definitional equalities.

  We first prove the existence of normal forms.
  We define a new CwF $\CC_{\nf}$; its substitutions, types and terms are those of $\CC$ equipped with normal forms.
  We omit the full definition of $\CC_{\nf}$, it is lengthy but straightforward.
  It involves the definition of the action of normal substitutions on normal forms.

  We have a projection morphism $F : \CC_{\nf} \to \CC$.
  We then construct a section $G$ of $F$, by transfinite induction on $i \le \omega$.
  The precise induction hypothesis is that for any $i \le \omega$, we construct a morphism $G_{i} : \CC_{i} \to \CC_{\nf}$ such that $F \circ G_{i} = \iota_{i}^{\omega}$.
  The zero and limit cases are straightforward, and in the successor case we only have to show that the generating types and terms admit a normal form.
  This holds essentially by definition of normal forms.
  By definition of $\CC_{\nf}$, the section $G$ equips every type $A$, term $a$ or substitution $\sigma$ with a normal form $\nf(A)$, $\nf(a)$ or $\nf(\sigma)$.

  In order to prove uniqueness, we prove that normalization is stable, \ie{} that for every normal form $A^{\nf} : \NfTy(A)$, $a^{\nf} : \Nf(a)$ or $\sigma^{\nf} : \Nf^{\star}(\sigma)$, we have $A^{\nf} = \nf(A)$, $a^{\nf} = \nf(a)$ or $\sigma^{\nf} = \nf(\sigma)$.
  This is shown by induction on normal forms.
  Most cases are straightforward.
  In the case of a generating type or term coming from the basic $I$-cellular map $\CC_{i} \to \CC_{i+1}$, we use the definition of $G_{i+1}$ on these generating types and terms.
\end{proof}

\section{Generic contexts}\label{sec:generic_contexts}

\subsection{Familially representable presheaves}\label{ssec:fam_rep}

We recall the notion of \emph{familially representable} presheaf~\cite{ConnectedLimitsFamilialRepresentabilityArtinGlueing,CorrigendaConnectedLimitsFamilialRepresentabilityArtinGlueing}.

\begin{definition}\label{def:familially_rep}
  Let $\CC$ be a category and $X : \UU$ be a presheaf over $\CC$.

  The following conditions are equivalent:
  \begin{enumerate}
  \item\label{itm:familially_rep_1} Every connected component of the category of elements $\int_{\CC} X$ is equipped with a terminal object.
    If $x : \yo(\Gamma) \to X$ is an element, the terminal object $x_{0} : \yo(\Gamma_{0}) \to X$ of its connected component is called the \defemph{most general generalization} of $x$.
  \item\label{itm:familially_rep_2} The presheaf $X$ can be decomposed as a coproduct of representable presheaves
    \[ X \simeq \coprod\limits_{i:I}(\yo(X_{i})) \]
    for some family of objects $X : I \to \Ob_{\CC}$ indexed by some set $I$.
  \item\label{itm:familially_rep_3} For every element $x : \yo(\Gamma) \to X$, we have an element $x_{0} : \yo(\Gamma_{0}) \to X$ and there is a unique morphism $f : \Gamma \to \Gamma_{0}$ such that $x = x_{0}[f]$.
    Furthermore, $x_{0}$ depends strictly naturally on $\Gamma$.
  \end{enumerate}

  When they hold, we say that $X$ is \defemph{familially representable}.
  \lipicsEnd{}
\end{definition}
\begin{proof}
  See~\cite{ConnectedLimitsFamilialRepresentabilityArtinGlueing} for the equivalence between conditions (\ref{itm:familially_rep_1}) and (\ref{itm:familially_rep_2}).
  Condition (\ref{itm:familially_rep_3}) is an unfolding of condition (\ref{itm:familially_rep_1}).
\end{proof}

\begin{definition}\label{def:locally_familially_representable}
  A dependent presheaf $Y : X \to \UU$ is \defemph{locally familially representable} if for every element $x : \yo(\Gamma) \to X$, the restricted presheaf
  \begin{alignat*}{3}
    & Y_{\mid x} && :{ } && \CPsh(\CC / \Gamma) \\
    & Y_{\mid x}(\rho : \Delta \to \Gamma) && \triangleq{ } && Y(x[\rho] : \yo(\Delta) \to X)
  \end{alignat*}
  is familially representable.
  \lipicsEnd{}
\end{definition}

Unfolding the definition, a dependent presheaf $Y : X \to \UU$ is locally familially representable if for every element $x : \yo(\Delta) \to X$, morphism $\rho : \Gamma \to \Delta$ and element $y : (\gamma : \yo(\Gamma)) \to Y(x(\rho(\gamma)))$, there is, strictly naturally in $\Gamma$, a map $\rho_{0} : \Gamma_{0} \to \Delta$ and an element $y : (\gamma : \yo(\Gamma_{0})) \to Y(x(\rho_{0}(\gamma)))$ such that there is a unique map $f : \Gamma \to \Gamma_{0}$ satisfying $\rho = \rho_{0} \circ f$ and $y = y_{0}[f]$.
The object $\Gamma_{0}$ can be seen as the extension of the context $\Delta$ that classifies the connected component of $y$.

\begin{proposition}\label{prop:lfr_telescopes}
  If a family $Y : X \to \UU$ is locally familially representable, the family of telescopes $Y^{\star} : X^{\star} \to \UU$ is also locally familially representable.
  \qed{}
\end{proposition}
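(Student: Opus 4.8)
The plan is to induct on the length of the telescope. Fix a base telescope $\Xi : \yo(\Gamma) \to X^{\star}$; since the length function $X^{\star} \to \Nat$ is natural and $\Nat$ is a constant presheaf, $\Xi$ has a well-defined length $n \in \Nat$, and I prove that the restricted presheaf $Y^{\star}_{\mid \Xi}$ over $\CC/\Gamma$ is familially representable by induction on $n$. For $n = 0$ we have $\Xi = \diamond$, so $Y^{\star}_{\mid \Xi}(\rho) = Y^{\star}(\diamond) = \top$ for every $\rho$; thus $Y^{\star}_{\mid \Xi}$ is the terminal presheaf on $\CC/\Gamma$, which is the representable $\yo(\id_{\Gamma})$ on the terminal object $\id_{\Gamma}$ of $\CC/\Gamma$, and is therefore (familially) representable.

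For the successor case, write $\Xi = \Xi'.A$ with $\Xi'$ of length $n-1$ and $A : (\gamma : \yo(\Gamma)) \to (Y^{\star}(\Xi'(\gamma)) \to X)$. Unfolding the recursive clause defining $Y^{\star}$ gives, for each $\rho : \Delta \to \Gamma$,
\[
  Y^{\star}_{\mid \Xi}(\rho) = (\vec{b} : Y^{\star}(\Xi'[\rho])) \times (c : Y(A[\rho](\vec{b}))),
\]
so $Y^{\star}_{\mid \Xi}$ is the total space of the dependent presheaf $(\rho, \vec{b}) \mapsto Y(A[\rho](\vec{b}))$ over $Y^{\star}_{\mid \Xi'}$. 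By the induction hypothesis $Y^{\star}_{\mid \Xi'}$ is familially representable, so I decompose it as a coproduct of representables $Y^{\star}_{\mid \Xi'} \simeq \coprod_{j \in J} \yo(\pi_{j})$ with $\pi_{j} : \Theta_{j} \to \Gamma$, and I write $\vec{b}_{j} : (\theta : \yo(\Theta_{j})) \to Y^{\star}(\Xi'(\pi_{j}(\theta)))$ for the generic element classified by the $j$-th summand.

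Over that summand the dependent presheaf above becomes, under the slice equivalence $\CPsh(\CC/\Gamma)_{/\yo(\pi_{j})} \simeq \CPsh(\CC/\Theta_{j})$, the presheaf $\phi \mapsto Y(A_{j}[\phi])$ for the element $A_{j} \triangleq (\theta \mapsto A(\pi_{j}(\theta))(\vec{b}_{j}(\theta))) : \yo(\Theta_{j}) \to X$; this is exactly $Y_{\mid A_{j}}$. Local familial representability of $Y$ at $A_{j}$ gives a decomposition $Y_{\mid A_{j}} \simeq \coprod_{k \in K_{j}} \yo(\pi_{jk})$ with $\pi_{jk} : \Theta_{jk} \to \Theta_{j}$. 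Since the total space of a dependent presheaf over a coproduct is the coproduct of its total spaces, computed fiberwise over each representable summand, I obtain
\[
  Y^{\star}_{\mid \Xi} \simeq \coprod_{j \in J} \coprod_{k \in K_{j}} \yo(\pi_{j} \circ \pi_{jk}),
\]
a coproduct of representables; hence $Y^{\star}_{\mid \Xi}$ is familially representable. Working throughout with the coproduct-of-representables characterization (condition~(\ref{itm:familially_rep_2}) of \cref{def:familially_rep}) makes the strict naturality of the most general generalizations automatic, so no separate naturality check is needed.

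The main obstacle is the bookkeeping of the successor step, namely identifying the fiber of the total space over the summand $\yo(\pi_{j})$ with $Y_{\mid A_{j}}$. Concretely, a lift of $\rho : \Delta \to \Gamma$ to the $j$-th summand is a map $\phi : \Delta \to \Theta_{j}$ with $\pi_{j} \circ \phi = \rho$ and $\vec{b} = \vec{b}_{j}[\phi]$, and under this correspondence one checks $A[\rho](\vec{b}) = A_{j}[\phi]$, so that local familial representability of $Y$ applies verbatim. Unfolded in the element-wise form of \cref{def:familially_rep}, this is the statement that the most general generalization of a telescope element $(\vec{b}, c)$ is obtained in two stages — first generalize the head $\vec{b}$ by the induction hypothesis, then generalize the last term $c$ over the resulting generic context by local familial representability of $Y$ — and the only real content is that these two universal properties compose into one.
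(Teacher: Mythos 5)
Your proof is correct. The paper in fact states this proposition without proof (the \qed{} in the statement marks it as routine), so there is nothing to diverge from: your induction on telescope length, whose successor step composes the induction hypothesis with local familial representability of $Y$ through the slice equivalence $\CPsh(\CC/\Gamma)_{/\yo(\pi_{j})} \simeq \CPsh(\CC/\Theta_{j})$, is exactly the intended argument --- it is the same two-stage composition of universal properties that the paper does spell out in the case $P = Q.M$ of its proof of \cref{prop:lf_famrep} --- and your observation that the coproduct-of-representables characterization of \cref{def:familially_rep} makes the strict naturality of most general generalizations automatic is sound.
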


\subsection{Polynomial sorts}

\begin{definition}
  Let $\CC$ be a CwF.
  We define global families $\BSort_{\CC}$ of \defemph{basic sorts}, $\MonoSort_{\CC}$ of \defemph{monomial sorts} and $\PolySort_{\CC}$ of \defemph{polynomial sorts}.
  We write $\Elem(-)$ for the elements of these families.
  Note that they do not necessarily have representable elements.
  \begin{itemize}
  \item A \defemph{basic sort} is either $\bty$ or $\btm(A)$ for some $A : \Ty_{\CC}$.
    \begin{alignat*}{3}
      & \Elem(\bty) && \triangleq{ } && \Ty_{\CC} \\
      & \Elem(\btm(A)) && \triangleq{ } && \Tm_{\CC}(A)
    \end{alignat*}
    We can view the basic sorts $\bty$ and $\btm(-)$ as \emph{codes} for the presheaves of types and terms.
  \item A \defemph{monomial sort} $[\Delta \vdash A]$ (or $[\delta:\Delta \vdash A(\delta)]$) consists of a telescope $\Delta : \Ty^{\star}_{\CC}$ and a dependent basic sort ${A : \Tm^{\star}_{\CC}(\Delta) \to \BSort_{\CC}}$.
    It represents dependent functions from $\Delta$ to $A$, or equivalently elements of $A$ in a context extended by $\Delta$.
    \begin{alignat*}{3}
      & \Elem([\Delta \vdash A]) && \triangleq{ } && (\delta : \Tm^{\star}_{\CC}(\Delta)) \to \Tm_{\CC}(A(\delta))
    \end{alignat*}
  \item A \defemph{polynomial sort} is a telescope of monomial sorts:
    \[ \PolySort_{\CC} \triangleq \MonoSort_{\CC}^{\star}.
      \lipicsEnd{}
    \]
  \end{itemize}
\end{definition}

Thus a polynomial sort is a dependent sum of dependent products of basic sorts.
Since dependent sums distribute over dependent products, $\PolySort_{\CC}$ is closed under dependent products with arities in $\Tm_{\CC}$.

The parameters of (both weakly and strictly stable) type-theoretic structures are all described by (closed)  polynomial sorts.
For instance, the parameters of an $\Id$-introduction structure are given by the closed polynomial sort
\[ \partial \Id \triangleq (A : \bty) \times (x : \btm(A)). \]
The parameters of an $\Id$-elimination structure are specified by the polynomial sort
\[ \partial \J \triangleq ((A,x) : \partial \Id) \times (P : [y : A(\gamma), p : \Id_{\Gamma}(\gamma,y) \vdash \bty]) \times (d : \btm(P(x(\gamma),\refl_{\Gamma}(\gamma)))). \]

\begin{definition}\label{def:condition_fr}
  We say that a CwF $\CC$ has \defemph{familially representable polynomial sorts} if for every closed polynomial sort $P$, the presheaf $\Elem(P)$ is familially representable.
  \lipicsEnd{}
\end{definition}

\subsection{Strictification}

\restateStrictId*
\begin{proof}
  The proof works for identity types with either a weak or a strict computation rule.

  We first consider the closed polynomial $\partial \Id \triangleq (A : \bty) \times (x : \btm(A))$.

  Let $\angles{A,x} : \yo(\Gamma) \to \Elem(\partial \Id)$ be the parameters of the stable $\Id$-introduction structure over a context $\Gamma$.
  Since $\CC$ has generic polynomial contexts, we can find a most general generalization
  $\angles{A_{0},x_{0}} : \yo(\Gamma_{0}) \to \Elem(\partial \Id)$
  of $\angles{A,x}$.
  By the universal property of $\angles{A_{0},x_{0}}$, we have a map $f : \Gamma \to \Gamma_{0}$ such that $\angles{A,x} = \angles{A_{0},x_{0}}[f]$.

  We then pose
  \begin{alignat*}{3}
    & \Id^{s}_{\Gamma}(A,x,y) && \triangleq{ } && \Id_{(\Gamma_{0},A_{0},x_{0})}[\angles{f,y}], \\
    & \refl^{s}_{\Gamma}(A,x) && \triangleq{ } && \refl_{(\Gamma_{0},A_{0},x_{0})}[f].
  \end{alignat*}
  Since most general generalizations are strictly natural, $(\Id^{s},\refl^{s})$ is a stable $\Id$-introduction structure.

  Now consider the polynomial sort
  \[ \partial \J \triangleq ((A,x) : \partial \Id) \times (P : [y : A, p : \Id^{s}(A,x,y)]\ \bty) \times (d : \btm(P(x,\refl^{s}(A,x)))). \]

  Let $\angles{A,x,P,d} : (\gamma : \yo(\Gamma)) \to \Elem(\partial \J)$ be the parameters of the stable $\Id$-elimination structure over $\Gamma$, $A$ and $x$.
  Since $\CC$ has generic polynomial contexts, we can find a most general generalization
  $\angles{A_{1},x_{1},P_{1},d_{1}} : \yo(\Gamma_{1}) \to \partial \J$.
  There is a unique map $g : \Gamma \to \Gamma_{1}$ such that $\angles{A,x,P,d} = \angles{A_{1},x_{1},P_{1},d_{1}}[g]$.

  We can also obtain the most general generalization $\angles{A_{0},x_{0}} : \yo(\Gamma_{0}) \to \partial \Id$ of $\angles{A_{1},x_{1}}$.
  We have a map $f : \Gamma_{1} \to \Gamma_{0}$ such that $\angles{A_{1},x_{1}} = \angles{A_{0},x_{0}}[f]$.
  By the universal property of most general generalizations, $\angles{A_{0},x_{0}}$ is also the most general generalization of $\angles{A,x}$.
  Thus by definition of $\Id^{s}$, we have $\Id^{s}_{\Gamma}(A,x,y) = \Id_{(\Gamma_{0},A_{0},x_{0})}[\angles{f \circ g, y}]$.

  We can finally pose
  \begin{alignat*}{3}
    & \J_{\Gamma}^{s}(A,x,P,d,y,p) && \triangleq{ } && \J_{(\Gamma_{0},A_{0},x_{0},\Gamma_{1},f,P_{1},d_{1})}[\angles{g,y,p}], \\
    & \Jbeta_{\Gamma}^{s}(A,x,P,d) && \triangleq{ } && \Jbeta_{(\Gamma_{0},A_{0},x_{0},\Gamma_{1},f,P_{1},d_{1})}[g].
  \end{alignat*}
  This determines a stable $\Id$-elimination structure $(\J^{s},\Jbeta^{s})$.
  Note that if $\Jbeta$ is strict, then $\Jbeta^{s}$ is also strict.

  By~\cref{prop:ws_wsid} the stable $\Id$-types are equivalent to the weakly stable identity types.
\end{proof}

\subsection{The local universes method}

We show that the local universes strictification method~\cite{LocalUniversesModel} factors through ours.

\begin{definition}[{\cite[Definition~3.1.3]{LocalUniversesModel}}]\label{def:condition_lf}
  A CwF $\CC$ satisfies the condition~\hyperref[def:condition_lf]{(LF)} if its underlying category has finite products, and given maps $Z \xrightarrow{g} Y \xrightarrow{f} X$, if $f$ is a display map and $g$ is either a display map or a product projection, then a dependent exponential $\Pi[f,g]$ exists.
  \lipicsEnd{}
\end{definition}
In the above definition, a display map is a finite composite of projections maps $\bm{p}_{A} : \Gamma.A \to \Gamma$; equivalently a display map is a projection map $\bm{p}_{\Delta} : \Gamma.\Delta \to \Gamma$ where $\Gamma$ is an object of $\CC$ and $\Delta$ is a telescope over $\Gamma$.

Condition~\hyperref[def:condition_lf]{(LF)} can essentially be unfolded into the following two representability conditions:
\begin{itemize}
\item For every object $\Gamma : \Ob_{\CC}$, telescope $\Delta : \yo(\Gamma) \to \Ty_{\CC}^{\star}$ and object $\Theta : \Ob_{\CC}$, the presheaf
  \[ (\gamma: \yo(\Gamma)) \times (\Tm_{\CC}^{\star}(\Delta(\gamma)) \to \yo(\Theta)) \]
  is representable.

\item For every object $\Gamma : \Ob_{\CC}$, telescope $\Delta : \yo(\Gamma) \to \Ty_{\CC}^{\star}$ and type
  \[ A : (\gamma : \yo(\Gamma)) (\delta : \Tm_{\CC}^{\star}(\Delta(\gamma))) \to \Ty_{\CC}, \]
  the presheaf
  \[ (\gamma: \yo(\Gamma)) \times ((\delta : \Tm_{\CC}^{\star}(\Delta(\gamma))) \to \Tm_{\CC}(A(\gamma,\delta))) \]
  is representable.
\end{itemize}

\begin{definition}
  Let $\CC$ be a CwF.

  A \defemph{local universe} is a pair $(V,E)$, where $V : \Ob_{\CC}$ is an object of $\CC$ and $E : \yo(V) \to \Ty_{\CC}$ is a type over $V$.

  The \defemph{local universe model} $\CC_{!}$ is another CwF over the same base category.
  We write $(\Ty_{!},\Tm_{!})$ for its family of types and terms.

  A type of $\CC_{!}$ is a triple $(V,E,\chi)$, where $(V,E)$ is a local universe, and $\chi : \yo(V)$.
  There is a natural transformation $\Ty_{!} \to \Ty_{\CC}$, sending $(V,E,\chi)$ to $E(\chi)$.

  The terms of $\CC_{!}$ are induced by this natural transformation: $\Tm_{!}(V,E,\chi) \triangleq \Tm_{\CC}(E(\chi))$.
  The local representability of the dependent presheaf $\Tm_{!}$ follows from the local representability of $\Tm_{\CC}$.
  \lipicsEnd{}
\end{definition}

There is a CwF morphism $\CC_{!} \to \CC$ lying over the identity functor.
That morphism is surjective on types and bijective on terms.
In particular, it is a trivial fibration.

Any weakly stable type-theoretic structure can be lifted along $\CC_{!} \to \CC$.
Since $\CC_{!} \to \CC$ is injective on terms, definitional equalities between terms can also be lifted.
It is however not generally possible to lift definitional equalities between types.

\begin{proposition}\label{prop:lf_famrep}
  If $\CC$ satisfies condition~\hyperref[def:condition_lf]{(LF)}, then $C_{!}$ has familially representable polynomial sorts.
\end{proposition}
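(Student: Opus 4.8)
The plan is to show that the dependent presheaf $\Elem : \MonoSort_{\CC_{!}} \to \UU$ of elements of monomial sorts is locally familially representable, and then to conclude for all polynomial sorts via \cref{prop:lfr_telescopes}. Since a polynomial sort is a telescope of monomial sorts, the dependent presheaf $\Elem : \PolySort_{\CC_{!}} \to \UU$ is the telescope family $\Elem^{\star}$ of $\Elem : \MonoSort_{\CC_{!}} \to \UU$; once the latter is locally familially representable, \cref{prop:lfr_telescopes} (applied with $X = \MonoSort_{\CC_{!}}$ and $Y = \Elem$) gives local familial representability of $\Elem$ over $\PolySort_{\CC_{!}} = \MonoSort_{\CC_{!}}^{\star}$. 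Finally, specializing to a closed polynomial sort $P : \yo(\diamond) \to \PolySort_{\CC_{!}}$ and using $\CC_{!}/\diamond \simeq \CC_{!}$, I obtain that $\Elem(P)$ is familially representable, which is exactly \cref{def:condition_fr}.

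First I would record two facts about $\CC_{!}$. The presheaf $\Ty_{!}$ is familially representable: substitution along $\rho$ sends $(V,E,\chi)$ to $(V,E,\chi \circ \rho)$, leaving the local universe $(V,E)$ fixed and only acting on $\chi : \Gamma \to V$, so that $\Ty_{!} \simeq \coprod_{(V,E)} \yo(V)$, indexed by the local universes. Second, context extension in $\CC_{!}$ reduces to that of $\CC$: extending $\Gamma$ by $(V,E,\chi)$ gives $\Gamma.(E \circ \chi)$, so for any telescope $\Delta$ of $\CC_{!}$-types over $\Gamma$ the dependent presheaf $\Tm^{\star}_{!}(\Delta)$ is represented by $\Gamma.\Delta$, an object of $\CC$. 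In particular the domains of the dependent products occurring in monomial sorts are locally representable.

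Next I would verify local familial representability of $\Elem$ at a monomial sort $M = [\Delta \vdash A]$ over $\Gamma$, by cases on the output basic sort $A$. If $A = \bty$, an element of $\Elem_{\mid M}$ over $\rho : \Theta \to \Gamma$ is a map $\Theta.\Delta[\rho] \to \Ty_{!}$; since a morphism out of a representable presheaf into a coproduct factors through a single summand, any such map factors through one $\yo(W)$ in $\Ty_{!} \simeq \coprod_{(W,D)} \yo(W)$. Hence the connected components of the category of elements of $\Elem_{\mid M}$ are indexed by the local universes $(W,D)$, the data of the component of $(W,D)$ being a pair $(\rho, \psi : \Theta.\Delta[\rho] \to W)$; by the first unfolding of \hyperref[def:condition_lf]{(LF)} the presheaf $(\gamma : \yo(\Gamma)) \times (\Tm^{\star}(\Delta(\gamma)) \to \yo(W))$ is representable, and its representing object, with its projection to $\Gamma$, is the terminal object of that component. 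If instead $A = \btm(B)$, the type $B : \Gamma.\Delta \to \Ty_{!}$ again factors through a single local universe $(W,D)$, giving $B = (W,D,\psi_{B})$; an element of $\Elem_{\mid M}$ over $\rho$ is then a term in $(\delta : \Tm^{\star}(\Delta[\rho])) \to \Tm_{\CC}(D(\psi_{B}[\rho](\delta)))$, so the second unfolding of \hyperref[def:condition_lf]{(LF)}, taken with $A(\gamma,\delta) \triangleq D(\psi_{B}(\gamma,\delta))$, shows $\Elem_{\mid M}$ representable, a fortiori familially representable with a single component. In both cases the strict naturality in $\Gamma$ demanded by \cref{def:familially_rep} holds because the generalizations are genuine representing objects supplied by \hyperref[def:condition_lf]{(LF)}.

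I expect the main obstacle to be this monomial-sort step, specifically the interplay between the coproduct decomposition $\Ty_{!} \simeq \coprod \yo(V)$ and the exponentials provided by \hyperref[def:condition_lf]{(LF)}. The key technical observation making it work is that local universes are invariant under substitution, so the indexing of connected components by local universes is itself natural; combined with the fact that a map out of a representable factors through a single summand of a coproduct, this lets the type-output case ($A = \bty$) produce one generic context per local universe, while the term-output case ($A = \btm(B)$) has its local universe pinned down by $B$ and yields a single generic context. Once the components and their terminal objects are identified, the remaining naturality checks and the passage from monomial to polynomial sorts through \cref{prop:lfr_telescopes} are routine.
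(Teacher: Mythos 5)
Your proof is correct, but it is organized differently from the paper's. The paper proves this proposition by a direct induction on closed polynomial sorts: in the inductive step $P = Q.M$ it first takes the most general generalization $q_{0}$ of the $Q$-component (induction hypothesis), then inspects the monomial sort $M[q_{0}]$, and in each of the two cases performs exactly the steps you perform --- extracting a local universe $(V,E)$ and a classifying map from the $\Ty_{!}$-valued data, applying the corresponding unfolding of condition~\hyperref[def:condition_lf]{(LF)} to obtain a representing object $\Gamma_{1}$ --- and then checks by hand, via the universal properties of $\Gamma_{1}$ and $q_{0}$, that the assembled element $\angles{q_{1},a_{1}}$ is a most general generalization. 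You instead prove the slice-wise statement that $\Elem : \MonoSort_{\CC_{!}} \to \UU$ is locally familially representable and delegate the telescope structure to~\cref{prop:lfr_telescopes}; this is exactly the strategy the paper uses for the free-CwF case (\cref{prop:free_mono_lfr} together with~\cref{prop:lfr_telescopes} yielding~\cref{thm:mgg_free}), but not the one it uses for~\cref{prop:lf_famrep}. The mathematical content of the monomial step is identical in both proofs; what differs is the bookkeeping. Your route buys modularity and uniformity with the free case: the composition-of-generalizations and uniqueness checks of the $Q.M$ step are done once and for all in~\cref{prop:lfr_telescopes}, and since you exhibit $\Elem_{\mid M}$ as a coproduct of representables (making explicit the decomposition $\Ty_{!} \simeq \coprod_{(V,E)} \yo(V)$ that the paper uses only implicitly when it extracts $(V,E,\chi)$ from a type), strict naturality is automatic from the equivalence of conditions in~\cref{def:familially_rep} rather than needing separate verification. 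The cost is that you must formulate and prove the local, slice-indexed version of the monomial statement over an arbitrary base context, which the paper's inline induction never needs to state; this is harmless here, because the two representability conditions extracted from condition~\hyperref[def:condition_lf]{(LF)} are already given over arbitrary contexts.
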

\begin{proof}
  We prove by induction on closed polynomial sorts that for every $P : \PolySort_{\CC_{!}}$, the presheaf $\Elem(P)$ is familially representable.
  \begin{description}
  \item[Case $P = \diamond$:] \hfill \\
    Then $\Elem(P)$ is the terminal presheaf, which is represented by the terminal object of $\CC$.
  \item[Case $P = Q. M$:] \hfill \\
    Here $M : \Elem(Q) \to \MonoSort_{\CC_{!}}$ is a monomial sort over $Q$.

    Take an element $\angles{q,a} : \yo(\Gamma) \to \Elem(Q. M)$.
    Our goal is to construct the most general generalization of $\angles{q,a}$, \ie{} a terminal object of the connected component of $\angles{q,a}$ in the category of elements of $\Elem(Q.M)$.

    By the induction hypothesis, we have a most general generalization $q_{0} : \yo(\Gamma_{0}) \to \Elem(Q)$ of $q$.
    By its universal property, there is a unique map $f : \Gamma \to \Gamma_{0}$ such that $q = q_{0}[f]$.

    We now inspect $M[q_{0}] : \yo(\Gamma_{0}) \to \MonoSort_{\CC_{!}}$, noting that $a : (\gamma : \yo(\Gamma)) \to \Elem(M[q_{0}](f(\gamma)))$.
    \begin{description}
    \item[Case ${M[q_{0}] = \lambda \gamma \mapsto [\Delta(\gamma) \vdash \bty]}$:] \hfill \\
      Here $\Delta : \yo(\Gamma_{0}) \to \Ty_{!}^{\star}$ is a telescope over $\Gamma_{0}$.

      We know that $a : (\gamma:\yo(\Gamma)) \to \Tm^{\star}_{!}(\Delta(f(\gamma))) \to \Ty_{!}$.
      By definition of the presheaf $\Ty_{!}$, this means that we have a local universe $(V,E)$ and a classifying map
      \[ \chi : (\gamma:\yo(\Gamma)) \to \Tm^{\star}_{!}(\Delta(f(\gamma))) \to \yo(V) \]
      such that
      $a = \lambda (\gamma,\delta) \mapsto E(\chi(\gamma,\delta))$.

      By condition~\hyperref[def:condition_lf]{(LF)}, there exists an object $\Gamma_{1}$ representing the presheaf
      \[ (\gamma : \yo(\Gamma_{0})) \times (v : \Tm^{\star}_{!}(\Delta(f(\gamma))) \to \yo(V)). \]

      We now define $\angles{q_{1},a_{1}} : \yo(\Gamma_{1}) \to \Elem(Q. M)$:
      \begin{alignat*}{3}
        & q_{1}(\gamma,v) && \triangleq{ } && q_{0}(\gamma), \\
        & a_{1}(\gamma,v) && \triangleq{ } && \lambda (\delta : \Tm^{\star}_{!}(\Delta(f(\gamma)))) \mapsto E(v(\delta)).
      \end{alignat*}

      We have $\angles{q,a} = \angles{q_{1},a_{1}}[\angles{f, \chi}]$.
      By the universal properties of $\Gamma_{1}$ and $q_{0}$, the element $\angles{q_{1},a_{1}}$ is the most general generalization of $\angles{q,a}$.

    \item[Case ${M[q_{0}] = \lambda \gamma \mapsto [\delta:\Delta(\gamma) \vdash \btm(A(\gamma,\delta))]}$:] \hfill \\
      Here $\Delta : \yo(\Gamma_{0}) \to \Ty_{!}^{\star}$ is a telescope over $\Gamma_{0}$ and $A : (\gamma : \yo(\Gamma_{0})) \to \Tm^{\star}_{!}(\Delta(\gamma)) \to \Ty_{!}$.
      We can decompose $A$ into a local universe $(V,E)$ and a classifying map
      \[ \chi : (\gamma : \yo(\Gamma_{0})) (\delta : \Tm^{\star}_{!}(\Delta(\gamma))) \to \yo(V) \]
      such that $A = \lambda(\gamma,\delta) \mapsto E(\chi(\gamma,\delta))$.

      We know that $a : (\gamma : \yo(\Gamma)) (\delta : \Tm^{\star}_{!}(\Delta[f])) \to \Tm_{\CC}(E(\chi(f(\gamma),\delta)))$.

      By condition~\hyperref[def:condition_lf]{(LF)}, there exists an object $\Gamma_{1}$ representing the presheaf
      \[ (\gamma : \yo(\Gamma_{0})) \times (x : (\delta : \Tm^{\star}_{!}(\Delta(f(\gamma)))) \to \Tm_{\CC}(E(\chi(\gamma,\delta)))). \]

      We now define $\angles{q_{1},a_{1}} : \yo(\Gamma_{1}) \to \Elem(Q. M)$:
      \begin{alignat*}{3}
        & q_{1}(\gamma,x) && \triangleq{ } && q_{0}(\gamma), \\
        & a_{1}(\gamma,x) && \triangleq{ } && \lambda (\delta : \Tm^{\star}_{!}(\Delta(f(\gamma)))) \mapsto x(\delta).
      \end{alignat*}

      We have $\angles{q,a} = \angles{q_{1},a_{1}}[\angles{f, a}]$.
      By the universal properties of $\Gamma_{1}$ and $q_{0}$, the element $\angles{q_{1},a_{1}}$ is the most general generalization of $\angles{q,a}$.
      \qedhere{}
    \end{description}
  \end{description}
\end{proof}

\section{Most general generalizations in free CwFs}\label{sec:mgg_free_cwfs}

In this section we prove the following result.
\restateMggFree*

We fix an $I$-cellular CwF $\CC$.
We use the explicit description of the types and terms of $\CC$ that was given in~\cref{constr:free_gens}.

\subsection{First-order unification}

First-order unification~\cite{RobinsonUnification,GoguenWhatIsUnification} is usually presented for free unityped or simply typed theories, but it is folklore that the same unification procedure is also valid for free dependently typed theories\footnote{This is observed by Simon Henry in \url{https://mathoverflow.net/questions/307373/on-a-surprising-property-of-free-theories}.}, \ie{} for freely generated CwFs.
In our setting, this means that the category of cones over any pair of parallel substitutions is either empty or has a terminal object, which is then the \emph{most general unifier} of the two substitutions.

We prove a slightly stronger result, for contexts that are split into \emph{flexible} and \emph{rigid} parts.
The unification procedure can only change the flexible part.

\begin{definition}
  An \defemph{unification context} is an object of the form $\Gamma.\Delta$, where $\Delta$ is a telescope over $\Gamma$.
  The variables of $\Gamma$ are called \defemph{flexible variables}, while the variables from $\Delta$ are called \defemph{rigid variables}.

  A morphism of unification contexts is a substitution that preserves the rigid variables, \ie{} a substitution of the form $\rho^{+} : \Theta.\Delta[\rho] \to \Gamma.\Delta$ for some $\rho : \Theta \to \Gamma$.
  \lipicsEnd{}
\end{definition}

\begin{definition}[Unifiers]
  Let $\Gamma.\Delta$ be a unification context and $X$ be a dependent presheaf over $\yo(\Gamma.\Delta)$.
  A \defemph{unifier} of a pair $a,b : (x : \yo(\Gamma.\Delta)) \to X(x)$ of parallel elements of $X$ is a morphism $\rho : \Theta \to \Gamma$ such that $a[\rho^{+}] = b[\rho^{+}]$.
  We say that $a$ and $b$ are \defemph{unifiable} if there merely exists a unifier.

  A \defemph{most general unifier} is a terminal unifier.
  \lipicsEnd{}
\end{definition}

\begin{lemma}[Instantiation]\label{constr:inst}
  Let $\Gamma$ be a context, $a : (\gamma:\yo(\Gamma)) \to \Tm_{\CC}(A(\gamma))$ be a variable from $\Gamma$ and $b : (\gamma:\yo(\Gamma)) \to \Tm_{\CC}(A(\gamma))$ be a term of type $A$ such that $b \neq a$.

  If the terms $a$ and $b$ are unifiable, then we can construct a most general unifier $\Gamma[a := b]$.
  Moreover, the length of $\Gamma[a := b]$ is less than the length of $\Gamma$.
\end{lemma}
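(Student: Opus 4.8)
The plan is to carry out the variable-elimination step of first-order unification, taking care of the dependent typing. Using contextuality of $\CC$ and the characterization of variables, I would first write $\Gamma$ as $\Gamma_{0}.(a : A').\Delta$, so that $a$ is a distinguished variable occurring at a fixed position, $A'$ is its type over the prefix $\Gamma_{0}$, and $\Delta$ is the telescope of the variables introduced after $a$. Note that $A'$ does not mention $a$.

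The first key step is the occurs check: unifiability of $a$ and $b$ forces $a$ not to occur in the normal form $\nf(b)$ of~\cref{constr:normal_forms}. Indeed, since $b \neq a$, any occurrence of $a$ in $b$ would be as a proper subtree of $\nf(b)$; applying an arbitrary unifier $\rho$ and renormalizing, that occurrence becomes $\nf(a[\rho])$ sitting as a proper subtree of $\nf(b[\rho])$, so that $b[\rho]$ has strictly greater size than $a[\rho]$, contradicting $a[\rho] = b[\rho]$. This argument only uses that normal forms are finite trees and that normalization is stable under substitution.

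Because normal forms spell out all the boundary arguments of every generating type and term, the occurs check has a stronger consequence that I would exploit next: no variable $w$ of $\Delta$ whose type mentions $a$ can occur in $\nf(b)$, since the application consuming $w$ would carry the type indices of $w$ --- and hence $a$ --- back into $\nf(b)$. Iterating this observation through chains of type dependencies shows that every variable that $b$ depends on has a type independent of $a$. Such variables, together with their own dependencies, can therefore be permuted in front of $a$, an operation realized by an isomorphism of contexts. After this reordering the substitution $[a := b]$ acts acyclically on the surviving telescope, and I define $\Gamma[a := b]$ to be $\Gamma$ with $a$ deleted and $b$ substituted for $a$ throughout; this is a well-formed context of length one less than $\Gamma$.

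It then remains to verify the universal property. The canonical morphism $m : \Gamma[a := b] \to \Gamma$ that is the identity on the surviving variables and sends $a$ to $b$ is a unifier by construction. Conversely, for any unifier $\rho : \Theta \to \Gamma$, the equation $a[\rho] = b[\rho]$ says precisely that $\rho$ is already compatible with $a \mapsto b$, so $\rho$ factors as $m \circ h$; the factorization $h$ is forced on every variable other than $a$ and is determined on $a$ by the remaining components, hence unique. Thus $\Gamma[a := b]$ is terminal in the category of unifiers, i.e.\ a most general unifier, and the length has strictly decreased. I expect the main obstacle to be the dependent bookkeeping of the third step: showing that eliminating $a$ never produces a circular telescope, which is exactly where the fully annotated form of normal forms is essential.
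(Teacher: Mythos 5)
Your proof is correct and takes essentially the same approach as the paper: an occurs check derived from unifiability, a dependency-closure argument showing that the support of $b$ can be permuted to precede $a$ (the paper phrases this as a bijective renaming $\Gamma \simeq \Gamma_{0}.\Gamma_{1}$ placing the support of $b$ first), then deletion of $a$ with substitution of $b$, and finally the universal property of the resulting context. You additionally spell out two points the paper leaves implicit --- the finite-size argument behind the occurs check and the fact that fully annotated normal forms force the type dependencies of variables occurring in $b$ to avoid $a$, which is what makes the reordering and the well-formedness of $\Gamma[a := b]$ legitimate --- so the proposal stands as a more detailed rendering of the same argument.
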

\begin{proof}
  We have a bijective renaming $\Gamma \simeq (\gamma_{0}:\Gamma_{0}). \Gamma_{1}(\gamma_{0})$ where $\Gamma_{0}$ is the support of the term $b$.
  Up to this renaming, we have $A : \yo(\Gamma_{0}) \to \Ty_{\CC}$, $b : (\gamma_{0}:\yo(\Gamma_{0})) \to \Tm_{\CC}(A(\gamma_{0}))$
  and $a : (\gamma_{0}:\yo(\Gamma_{0}))(\gamma_{1}:\Tm_{\CC}^{\star}(\Gamma_{1}(\gamma_{0}))) \to \Tm_{\CC}(A(\gamma_{0}))$.

  The variable $a$ cannot belong to the support $\Gamma_{0}$ of $b$, since $a$ and $b$ are unifiable and different; this is the \emph{occurs check} of first-order unification.
  Indeed, assuming that $a$ did belong to $\Gamma_{0}$ and considering the unifier $\rho$ of $a$ and $b$, the term $b[\rho]$ would be infinite.

  Thus $a$ is a variable from $\Gamma_{1}$ and we can write $\Gamma_{1}(\gamma_{0}) = (\gamma_{2}:\Gamma_{2}(\gamma_{0})).(a:A(\gamma_{0})).\Gamma_{3}(\gamma_{0},\gamma_{2},a)$.

  We now pose $\Gamma[a := b] \triangleq (\gamma_{0}:\Gamma_{0}).(\gamma_{2}:\Gamma_{2}(\gamma_{0})).\Gamma_{3}(\gamma_{0},\gamma_{2},b)$.
  It is the most general unifier of $a$ and $b$.
\end{proof}

\begin{lemma}[Strengthening]\label{constr:strengthening}
  Let $\Gamma.\Delta$ be a unification context, $a : (\gamma:\yo(\Gamma)) \to \Tm_{\CC}(A(\gamma))$ be a term over $\Gamma$ and $b : ((\gamma,\delta):\yo(\Gamma.\Delta)) \to \Tm_{\CC}(A(\gamma))$ be a term of type $A[\bm{p}_{\Delta}]$.

  If the terms $a[\bm{p}_{\Delta}]$ and $b$ are unifiable, then there exists a (necessarily unique) term $b' : (\gamma:\yo(\Gamma)) \to \Tm_{\CC}(A(\gamma))$ such that $b = b'[\bm{p}_{\Delta}]$.
\end{lemma}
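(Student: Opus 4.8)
The plan is to argue through the unique normal forms of \cref{constr:normal_forms}: a term over $\Gamma.\Delta$ factors through $\bm{p}_{\Delta}$ exactly when no rigid variable occurs in its normal form, and the unifiability hypothesis forces precisely this. So the strategy is to show that the normal form $\nf(b)$ mentions no variable from $\Delta$, and then read off $b'$ as the corresponding term over $\Gamma$.

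First I would record the interaction between $\bm{p}_{\Delta}$ and a morphism of unification contexts. For a unifier $\rho : \Theta \to \Gamma$, the square defining $\rho^{+} : \Theta.\Delta[\rho] \to \Gamma.\Delta$ gives $\bm{p}_{\Delta} \circ \rho^{+} = \rho \circ \bm{p}_{\Delta[\rho]}$, so that $a[\bm{p}_{\Delta}][\rho^{+}] = a[\bm{p}_{\Delta} \circ \rho^{+}] = a[\rho][\bm{p}_{\Delta[\rho]}]$. The right-hand side is a weakening of the term $a[\rho]$ over $\Theta$, so its normal form mentions only flexible variables (those of $\Theta$) and no rigid variable of $\Theta.\Delta[\rho]$. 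Using the assumption $a[\bm{p}_{\Delta}][\rho^{+}] = b[\rho^{+}]$, I conclude that $\nf(b[\rho^{+}])$ is free of rigid variables.

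Next I would analyse how $\rho^{+}$ acts on $\nf(b)$. The substitution $\rho^{+}$ fixes every rigid variable and sends every flexible variable $x$ to $\rho(x)$, a term over $\Theta$ whose normal form contains no rigid variable. Because $\CC$ is free there are no definitional equalities and hence no redexes: grafting the normal forms $\nf(\rho(x))$ into the flexible leaves of $\nf(b)$, while leaving the rigid leaves untouched, directly yields $\nf(b[\rho^{+}])$. This grafting is exactly the action of normal substitutions on normal forms built into $\CC_{\nf}$ in the construction underlying \cref{constr:normal_forms}. Consequently every rigid leaf of $\nf(b)$ persists as a rigid leaf of $\nf(b[\rho^{+}])$; since the latter is rigid-free, $\nf(b)$ mentions no rigid variable.

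Finally, a normal form of $b$ mentioning no rigid variable is, verbatim (under the renaming $\Gamma.\Delta$), a normal form over $\Gamma$; I take $b'$ to be the corresponding term, so that $\nf(b'[\bm{p}_{\Delta}]) = \nf(b') = \nf(b)$ and hence $b = b'[\bm{p}_{\Delta}]$ by uniqueness of normal forms. Uniqueness of $b'$ is the same observation read backwards: if $b'_{1}[\bm{p}_{\Delta}] = b'_{2}[\bm{p}_{\Delta}]$ then $\nf(b'_{1}) = \nf(b'_{1}[\bm{p}_{\Delta}]) = \nf(b'_{2}[\bm{p}_{\Delta}]) = \nf(b'_{2})$, whence $b'_{1} = b'_{2}$. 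The main obstacle is the bookkeeping in the third paragraph: one must make precise, from the explicit description of $\CC_{\nf}$, that restriction along $\rho^{+}$ is exactly this variable-by-variable grafting on normal forms, so that occurrences of rigid variables are neither created nor destroyed. Everything else follows immediately from the uniqueness of normal forms.
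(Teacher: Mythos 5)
Your proposal is correct and follows essentially the same route as the paper's proof: take a unifier $\rho$, observe $b[\rho^{+}] = a[\rho][\bm{p}_{\Delta[\rho]}]$ is free of rigid variables, and conclude that $b$ itself cannot depend on $\Delta$ because $\rho^{+}$ preserves the rigid variables. The paper states this last step informally in one line, whereas you make it precise via the grafting action of substitution on normal forms; that is exactly the intended justification, not a different argument.
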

\begin{proof}
  Let $\rho : \Omega \to \Gamma$ be a unifier of $a[\bm{p}_{\Delta}]$ and $b$.
  Then $b[\rho^{+}] = a[\bm{p}_{\Delta}][\rho^{+}] = a[\rho][\bm{p_{\Delta[\rho]}}]$.
  Thus $b[\rho^{+}]$ cannot depend on any variable from $\Delta[\rho]$.
  Since $\rho^{+}$ preserves the variables of $\Delta$, the term $b$ cannot depend on any variable from $\Delta$.
  Therefore it can be strengthened to some term $b' : (\gamma:\yo(\Gamma)) \to \Tm_{\CC}(A(\gamma))$.
\end{proof}

\begin{theorem}[First-order unification]\label{thm:mgu_flex_rigid}
  Let $\Gamma.\Delta$ be a unification context and $X : \yo(\Gamma.\Delta) \to \UU$ a dependent presheaf of the form $\Tm^{\star}_{\CC}(\Xi)$, $\Ty_{\CC}$ or $\Tm_{\CC}(A(-))$.
  If there exists a unifier $\sigma : \Theta \to \Gamma$ of a pair $x_{1},x_{2} : ((\gamma,\delta) : \yo(\Gamma.\Delta)) \to X(\gamma,\delta)$ of parallel elements of $X$, then there exists a most general unifier $\rho : \Omega \to \Gamma$, such that either $\rho = \id$ or the length of $\Omega$ is less than the length of $\Gamma$.
  \qed{}
\end{theorem}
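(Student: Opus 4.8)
The plan is to define the unification procedure by well-founded recursion on a lexicographic measure and then verify that it returns a most general unifier satisfying the stated length bound. The measure is the pair consisting of (i) the length of the flexible context $\Gamma$ and (ii) the combined size of the unique normal forms (\cref{constr:normal_forms}) of $x_1$ and $x_2$, ordered with the length taking priority. The recursion compares the normal forms of $x_1$ and $x_2$, and throughout I would use the crucial observation that a morphism of unification contexts $\rho^{+}$ substitutes only the flexible variables while fixing every rigid variable of $\Delta$. By the description of the action of substitution on normal forms, $\rho^{+}$ therefore preserves the outermost generating symbol of any normal form and fixes each rigid variable. This head- and rigidity-preservation, together with uniqueness of normal forms, is what lets me either match the two sides structurally or else derive a contradiction with the hypothesis that a unifier $\sigma$ exists.

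First I would handle the structural cases. If $X = \Tm^{\star}_{\CC}(\Xi)$, I case on $\Xi$: the empty telescope forces $x_1 = x_2 = \angles{}$ and $\rho = \id$, and for $\Xi = \Xi'.B$ I write $x_i = \angles{\delta_i,a_i}$ and unify sequentially. I first unify the $\delta_i$, which have strictly smaller size at the same flexible length, obtaining a most general unifier $\rho_1 : \Omega_1 \to \Gamma$ from the inductive hypothesis (a unifier exists because $\sigma$ restricts to one); I then unify $a_1[\rho_1^{+}]$ and $a_2[\rho_1^{+}]$ to get $\rho_2 : \Omega_2 \to \Omega_1$, and return $\rho_1 \circ \rho_2$. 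The second recursive call is justified in two ways: if $\rho_1 = \id$ the size has strictly decreased at unchanged $\Gamma$, while if $\rho_1 \neq \id$ the flexible length has strictly decreased. If $X$ is $\Ty_{\CC}$ or $\Tm_{\CC}(A(-))$ and both normal forms are headed by generating symbols, say $\opf[\tau_1]$ and $\opg[\tau_2]$ (and similarly $\iS[\tau_1]$, $\iS'[\tau_2]$ for types), then head-preservation and uniqueness of normal forms force the heads to coincide once a unifier exists; I then recurse on the strictly smaller argument telescopes $\tau_1,\tau_2$.

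Next I would treat the variable cases for $X = \Tm_{\CC}(A(-))$, where at least one side is a variable. A rigid variable is fixed by every $\rho^{+}$ and its normal form is never headed by a generating symbol, so it can be unified only with the identical rigid variable; every other pairing involving a rigid variable contradicts the existence of $\sigma$ and is vacuous (this is the \emph{clash} and \emph{occurs-check} content). When one side is a flexible variable $a$ and the other is a term $b$ with $b \neq a$, I first apply Strengthening (\cref{constr:strengthening}): the hypothesized unifier witnesses that $a$ and $b$ are unifiable, hence $b$ factors through $\bm{p}_{\Delta}$ as $b'[\bm{p}_{\Delta}]$ for a unique $b'$ over $\Gamma$. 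The same factorisation shows $a$ and $b'$ are unifiable over $\Gamma$, so Instantiation (\cref{constr:inst}) yields the most general unifier $\Gamma[a := b']$, whose length is strictly smaller than that of $\Gamma$. If instead $b' = a$, then $x_1 = x_2$ and $\rho = \id$.

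Finally I would assemble the length bound and the universal property. Each leaf returns either $\id$ or a unifier from Instantiation that strictly shortens $\Gamma$, and in the telescope case $\rho_1 \circ \rho_2$ equals $\id$ exactly when both factors do and otherwise strictly shortens $\Gamma$; so the invariant ``$\rho = \id$ or the length decreases'' propagates through the recursion. The most-general property follows compositionally: any unifier of the whole problem restricts to a unifier of each decomposed subproblem and thus factors uniquely through the terminal unifiers returned by the recursive calls and by the two lemmas. I expect the main obstacle to be organising the termination argument so that the size-reducing structural steps and the length-reducing instantiation steps cooperate — in particular, justifying the recursive unification of $a_1[\rho_1^{+}]$ and $a_2[\rho_1^{+}]$ after a substitution that may enlarge their normal-form size, which is exactly where the lexicographic priority of the flexible-context length over the normal-form size is indispensable.
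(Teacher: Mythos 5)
Your proposal is correct and takes essentially the same route as the paper's proof: the same lexicographic induction (flexible-context length taking priority over the structure of normal forms), the same case analysis on telescopes, generating-symbol heads, and variables, and the same appeals to Strengthening (\cref{constr:strengthening}) and Instantiation (\cref{constr:inst}) in the flexible-variable case, with the composite-unifier argument for telescopes handled identically. The only cosmetic differences are that you make the clash/occurs-check cases explicit (the paper leaves them implicit in its ``otherwise'' case) and measure the combined size of both normal forms rather than inducting structurally on $x_{1}$ alone.
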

\begin{proof}
  By nested inductions first on the length of $\Gamma$, and then on the normal form of the substitution, type, or term $x_{1}$.
  \begin{description}
  \item[Case $({\Gamma = \diamond})$:]
    Let $\sigma : \Theta \to \diamond$ be a unifier of $x_{1}$ and $x_{2}$.
    The map $\sigma = \angles{}$ is an epimorphism.

    Thus $x_{1} = x_{2}$ and $\id : \Gamma \to \Gamma$ is the most general unifier of $x_{1}$ and $x_{2}$.

  \item[Case $({X = \Tm^{\star}_{\CC}(\diamond)})$:] \hfill \\
    In that case, $x_{1} = x_{2} = \angles{}$ and $\id : \Gamma \to \Gamma$ is the most general unifier of $x_{1}$ and $x_{2}$.

  \item[Case $({X = \Tm^{\star}_{\CC}(\Xi.A)})$:] \hfill \\
    We can write $x_{1} = \angles{\xi_{1},a_{1}}$ and $x_{2} = \angles{\xi_{2},a_{2}}$.
    By the induction hypothesis for $\xi_{1}$, we have a most general unifier $\rho : \Omega \to \Gamma$ of $\xi_{1}$ and $\xi_{2}$.

    If $\rho = \id$, then $a_{1}$ and $a_{2}$ are parallel terms and by the induction hypothesis for $a_{1}$ we can find a most general unifier $\rho' : \Omega' \to \Gamma$ of $a_{1}$ and $a_{2}$.
    It is then also a most general unifier of $x_{1}$ and $x_{2}$.

    Otherwise, the length of $\Omega$ is less than the length of $\Gamma$.
    By the induction hypothesis for $\Omega$, we can then find a most general unifier $\rho' : \Omega' \to \Omega$ of $a_{1}[\rho]$ and $a_{2}[\rho]$.
    The composite $(\rho \circ \rho') : \Omega' \to \Gamma$ is then a most general unifier of $x_{1}$ and $x_{2}$.

  \item[Case $({X = \Ty_{\CC}})$:] \hfill \\
    We can write $x_{1} = \iS[\sigma_{1}]$ for some generating type $\iS$ and $\sigma_{1} : \Gamma.\Delta \to \partial \iS$.
    Since $x_{1}$ and $x_{2}$ are unifiable, we can also write $x_{2} = \iS[\sigma_{2}]$ for some $\sigma_{2} : \Gamma.\Delta \to \partial \iS$.
    By the induction hypothesis for $\sigma_{1}$, we have a most general unifier of $\sigma_{1}$ and $\sigma_{2}$.
    It is then also a most general unifier of $x_{1}$ and $x_{2}$.

  \item[Case $({X = \Tm_{\CC}(A(-))})$:] \hfill \\
    We have several subcases depending on the parallel terms $x_{1}$ and $x_{2}$.
    \begin{description}
    \item[Case ${(x_{1} = \opf[\sigma_{1}])}$ and ${(x_{2} = \opg[\sigma_{2}])}$:] \hfill \\
      Since $x_{1}$ and $x_{2}$ are unifiable, $\opf = \opg$.
      Here $\sigma_{1} : \Gamma.\Delta \to \partial \opf$ and $\sigma_{2} : \Gamma.\Delta \to \partial \opf$.
      By the induction hypothesis for $\sigma_{1}$, we have a most general unifier of $\sigma_{1}$ and $\sigma_{2}$.
      It is then also a most general unifier of $x_{1}$ and $x_{2}$.
    \item[If either $x_{1}$ or $x_{2}$ is a variable from $\Gamma$:] \hfill \\
      Without loss of generality, assume that $x_{1}$ is a variable from $\Gamma$.
      By~\cref{constr:strengthening}, the term $x_{2}$ can be strengthened to only depend on $\Gamma$.
      If $x_{1} = x_{2}$ then $\id : \Gamma \to \Gamma$ is the most general unifier of $x_{1}$ and $x_{2}$.
      Otherwise $x_{1} \neq x_{2}$ and the instantiation $\Gamma[x_{1} := x_{2}]$ is the most general unifier of $x_{1}$ and $x_{2}$, by~\cref{constr:inst}.
      The length of $\Gamma[x_{1} := x_{2}]$ is then less than the length of $\Gamma$.
    \item[Otherwise, both $x_{1}$ and $x_{2}$ are variables from $\Delta$:] \hfill \\
      Since $x_{1}$ and $x_{2}$ are unifiable by a substitution that preserves the variables from $\Delta$, they have to be equal.
      Then $\id : \Gamma \to \Gamma$ is the most general unifier of $x_{1}$ and $x_{2}$.
      \qedhere{}
    \end{description}
  \end{description}
\end{proof}

\begin{remark}
  Note that~\cref{thm:mgu_flex_rigid} implies that the families
  \begin{alignat*}{3}
    & (\Delta : \Ty^{\star}_{\CC}) \times (f,g : \Tm^{\star}_{\CC}(\Delta) \to \yo(\Xi)) && \mapsto{ } && f = g \\
    & (\Delta : \Ty^{\star}_{\CC}) \times (A,B : \Tm^{\star}_{\CC}(\Delta) \to \Ty_{\CC}) && \mapsto{ } && A = B \\
    & (\Delta : \Ty^{\star}_{\CC}) \times (A : \Tm^{\star}_{\CC}(\Delta) \to \Ty_{\CC}) \times (a,b : \forall \delta \to \Tm_{\CC}(A(\delta))) && \mapsto{ } && a = b
  \end{alignat*}
  are locally familially representable.
  Indeed, their categories of elements are the categories of unifiers for substitutions, types or terms.
  By~\cref{thm:mgu_flex_rigid}, these categories are either empty, or admit a terminal object.
  In particular, every connected component admits a terminal object.
  \lipicsEnd{}
\end{remark}

\subsection{Most general generalizations}

We now apply first-order unification to the construction of most general generalizations.

We first describe this construction informally.
For any type $B$ over a unification context $\Gamma. \Delta$, we compute some $B_{0}$ over a context of the form $\Gamma_{0}. \Delta_{0}$ and a substitution $f : \Gamma \to \Gamma_{0}$ such that $\Delta = \Delta_{0}[f]$ and $B = B_{0}[f^{+}]$.
The type $B_{0}$ should be the \emph{most general generalization} of $B$ that retains the dependency on $\Delta$.

The type $B_{0}$ is essentially obtained by removing the dependencies on $\Gamma$, that is by replacing the subterms of $B$ that only depend on $\Gamma$ by new variables; these new variables are collected in the new context $\Gamma_{0}$.
Because of the dependencies of the generating terms, it is not always possible to fully remove a subterm.
We have to rely on first-order unification to determine which parts can be removed; some of the new variables may need to be instantiated to more precise terms.

We give examples involving the following generating types and terms.
\begin{alignat*}{3}
  & \iX && :{ } && \Ty \\
  & \iY && :{ } && \Tm(\iX) \to \Ty \\
  & \inner{f_{1}} && :{ } && \Tm(\iX) \to \Tm(\iX) \\
  & \inner{f_{2}} && :{ } && \Tm(\iX) \to \Tm(\iX) \to \Tm(\iX) \\
  & \inner{g} && :{ } && \forall (x : \Tm(\iX))\ (y : \Tm(\iY(x))) \to \Tm(\iX) \\
  & \inner{h} && :{ } && \forall (x : \Tm(\iX)) \to \Tm(\iY(x)) \to \Tm(\iY(\inner{f_{1}}(x)))
\end{alignat*}

We write $x,y,z,\dots$ for the variables from $\Gamma$ and $\overline{x},\overline{y},\overline{z},\dots$ for the variables from $\Delta$.

\begin{itemize}
\item Consider $B = \iY(\inner{f_{1}}(x))$ over $(x : \iX)$. \\
  Then we can pose $B_{0} = \iY(y)$ over $(y : \iX)$, we have $B = B_{0}[y \mapsto \inner{f_{1}}(x)]$.
\item Consider $B = \iY(\inner{f_{1}}(\overline{x}))$ over $(\overline{x} : \iX)$. \\
  Then we have to keep $B_{0} = \iY(\inner{f_{1}}(\overline{x}))$.
\item Consider $B = \iY(\inner{f_{2}}(\inner{f_{1}}(x), \inner{f_{1}}(\overline{y})))$ over $( x : \iX, \overline{y} : \iX)$. \\
  Then $B_{0} = \iY(\inner{f_{2}}(z, \inner{f_{1}}(\overline{y})))$ over $(z : \iX, \overline{y} : \iX)$; we have $B = B_{0}[z \mapsto \inner{f_{1}}(x)]$.
\item Consider $B = \iY(\inner{g}(\inner{f_{1}}(x), \overline{y}))$ over $(x : \iX, \overline{y} : \iY(\inner{f_{1}}(x)))$. \\
  Then $B_{0} = \iY(\inner{g}(z,\overline{y}))$ over $(z : \iX, \overline{y} : \iY(z))$; we have $B = B_{0}[z \mapsto \inner{f_{1}}(x), \overline{y} \mapsto \overline{y}]$.
\item Consider $B = \iY(\inner{g}(\inner{f_{1}}(x), \inner{h}(x,\overline{y})))$ over $(x : \iX, \overline{y} : \iY(x))$. \\
  The $B_{0} = B$.
  We cannot prune the subterm $\inner{f_{1}}(x)$, because of the typing constraints of $\inner{g}$ and $\inner{h}$.
\end{itemize}

\begin{proposition}\label{prop:free_mono_lfr}
  The families
  \begin{alignat}{3}
    \label{itm:free_mono_lfr_subst}
    & (\Delta : \Ty^{\star}_{\CC}) \times (\Xi : \Ob_{\CC}) && \mapsto{ } &&
                                                                             (\Tm_{\CC}^{\star}(\Delta) \to \Tm_{\CC}^{\star}(\Xi)) \\
    \label{itm:free_mono_lfr_ty}
    & (\Delta : \Ty^{\star}_{\CC}) && \mapsto{ } &&
                                                    (\Tm_{\CC}^{\star}(\Delta) \to \Ty_{\CC}) \\
    \label{itm:free_mono_lfr_tm}
    & (\Delta : \Ty^{\star}_{\CC}) \times (A : \Tm_{\CC}^{\star}(\Delta) \to \Ty_{\CC}) && \mapsto{ } &&
                                                                                                         \forall (\delta : \Tm_{\CC}^{\star}(\Delta)) \to \Tm_{\CC}(A(\delta))
  \end{alignat}
  are locally familially representable.

  In particular the family $\MonoSort_{\CC}$, which is the coproduct of the families (\ref{itm:free_mono_lfr_ty}) and (\ref{itm:free_mono_lfr_tm}), is locally familially representable.
\end{proposition}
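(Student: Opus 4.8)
The plan is to reduce all three families to a single \emph{generalization procedure} on normal forms over unification contexts, dual to the unification procedure of \cref{thm:mgu_flex_rigid}. Concretely, given a substitution, type, or term $B$ over a unification context $\Gamma.\Delta$ (with $\Gamma$ flexible and $\Delta$ rigid), I would construct a generic unification context $\Gamma_{0}.\Delta_{0}$, a generic datum $B_{0}$ over it, and a substitution $f : \Gamma \to \Gamma_{0}$ on the flexible parts such that $\Delta = \Delta_{0}[f]$ and $B = B_{0}[f^{+}]$, and prove that $(\Gamma_{0},\Delta_{0},B_{0})$ is terminal among all such generalizations. Unfolding \cref{def:locally_familially_representable}, this terminal generalization is precisely the most general generalization witnessing that the restricted presheaves $Y_{\mid x}$ are familially representable, so establishing it simultaneously for substitutions, types, and terms yields local familial representability of \eqref{itm:free_mono_lfr_subst}, \eqref{itm:free_mono_lfr_ty} and \eqref{itm:free_mono_lfr_tm}. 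The claim for $\MonoSort_{\CC}$ then follows since $\Elem(\MonoSort_{\CC})$ is the coproduct of the presheaves for \eqref{itm:free_mono_lfr_ty} and \eqref{itm:free_mono_lfr_tm}, and familial representability is stable under coproducts by condition~\eqref{itm:familially_rep_2} of \cref{def:familially_rep}.

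The construction proceeds by a nested induction mirroring \cref{thm:mgu_flex_rigid}: an outer induction on the length of the flexible context, and an inner induction on the normal form of $B$ (\cref{constr:normal_forms}). The three families are genuinely mutually recursive, since the normal form of a type is $\iS[\tau]$ and that of a term is $\opf[\tau]$, each reducing to the generalization of the argument substitution $\tau$ into the closed context $\partial \iS$ or $\partial \opf$, while a substitution is in turn a telescope of terms. To generalize a substitution $\angles{\tau',b}$ into a telescope $\Xi'.C$, I would first recursively generalize $\tau'$, obtaining $\Gamma_{0}$, $\tau'_{0}$ and $f'$, and then generalize $b$ as a term of the \emph{expected} type $C(\tau'_{0})$ over $\Gamma_{0}.\Delta_{0}$. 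In the base term case I would split on whether the normal form of $b$ depends on a rigid variable of $\Delta$: a purely flexible $b$ is floated out to a fresh flexible variable adjoined to $\Gamma_{0}$, whereas a rigid-dependent term $b = \opf[\tau]$ must be retained with its head and its arguments generalized recursively. For \eqref{itm:free_mono_lfr_subst} one may package the telescope step cleanly by invoking \cref{prop:lfr_telescopes} once the term family \eqref{itm:free_mono_lfr_tm} is available, a substitution into a closed context being a telescope of terms.

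The delicate point, and the main obstacle, is the interaction between floating out flexible subterms and the typing constraints, exactly as in the last example preceding this proposition. When a retained rigid-dependent subterm is placed under a generating operation, its \emph{actual} type, computed from its structure, need not coincide with the type \emph{expected} by that operation in the generic telescope, whose flexible subterms have been replaced by fresh variables. Reconciling the two amounts to unifying the expected and the actual type; by \cref{thm:mgu_flex_rigid} a most general unifier exists, the constraint being solvable because the original $B$ is well-typed, and applying it may instantiate some of the freshly introduced flexible variables, pruning $\Gamma_{0}$ and thereby undoing premature floats. Since such an instantiation strictly shortens the flexible context, it is discharged by the outer induction hypothesis, exactly as instantiation is used in \cref{thm:mgu_flex_rigid} through \cref{constr:inst,constr:strengthening}.

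Finally, I would verify the universal property and strict naturality, completing condition~\eqref{itm:familially_rep_3} of \cref{def:familially_rep}. Uniqueness of the factorization $f$ is proved alongside the construction: any $f' : \Gamma \to \Gamma_{0}$ with $\Delta = \Delta_{0}[f']$ and $B = B_{0}[f'^{+}]$ is forced to send each fresh flexible variable of $\Gamma_{0}$ to the corresponding floated subterm of $B$, which determines it; here the fact that every reconciliation step records a \emph{most general} unifier is essential, as it guarantees that $\Gamma_{0}$ carries no residual slack. Strict naturality, namely that the most general generalizations of $B$ and of its restriction along a flexible substitution agree, follows because the whole procedure is defined by recursion on normal forms, which are unique and stable under substitution, so that the generic data $\Gamma_{0},\Delta_{0},B_{0}$ depend only on the normal form of $B$ and not on the ambient flexible context.
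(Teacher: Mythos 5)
Your proposal follows essentially the same route as the paper: unfold local familial representability into an explicit terminal-generalization problem over unification contexts, build the generalization by induction on normal forms (mutually for substitutions, types and terms), float rigid-independent subterms out into fresh flexible variables, retain rigid-dependent subterms $\opf[\tau]$, and reconcile the expected type with the actual type of such a subterm via the most general unifier of \cref{thm:mgu_flex_rigid}; the coproduct statement for $\MonoSort_{\CC}$ is handled identically. One structural remark: the paper needs no outer induction on the length of the flexible context in this proposition. Unification is invoked as a black box, and after applying the MGU $\rho_{1}$ one simply composes it with the data produced by the inner induction; the nested length/normal-form induction lives inside \cref{thm:mgu_flex_rigid} only, so your extra outer induction is harmless but redundant.

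However, your justification of strict naturality is wrong as stated, and it concerns exactly the point the paper flags as the main subtlety. Normal forms are \emph{not} stable under substitution in the flexible variables: if $a$ is a flexible variable of $\Gamma$, then $a[\sigma^{+}]$ is in general not a variable but an arbitrary term over $\Lambda$, so its normal form has a completely different shape. Hence ``the generic data depend only on the normal form of $B$'' cannot be the reason naturality holds. The correct argument is a further induction showing that the procedure's case analysis agrees for $B$ and $B[\sigma^{+}]$: this works precisely because your first term case is keyed on rigid-(in)dependence --- a property that \emph{is} invariant under $\sigma^{+}$, since $\sigma^{+}$ preserves the rigid variables --- and because the float-out case forgets the shape of the floated subterm, producing the same fresh flexible variable for $a$ and for $a[\sigma^{+}]$. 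Your construction is set up so that this argument goes through (had you instead keyed the case split on ``being a variable,'' naturality would fail); only the stated justification needs to be replaced by this inductive comparison.
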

\begin{proof}
  The local familial representability can be unfolded to the following conditions:

  Fix the following data:
  \begin{itemize}
  \item An object $\Gamma : \Ob_{\CC}$;
  \item An object $\Omega$, a telescope $\Delta : \yo(\Omega) \to \Ty^{\star}_{\CC}$ and a map $\rho : \Gamma \to \Omega$;
  \item Either:
    \begin{itemize}
    \item An object $\Xi$ and a map $\xi : \Gamma.\Delta[\rho] \to \Xi$;
    \item A type $A : \yo(\Gamma.\Delta[\rho]) \to \Ty_{\CC}$;
    \item A type $A : \yo(\Omega.\Delta) \to \Ty_{\CC}$ and a term $a : (x : \yo(\Gamma.\Delta[\rho])) \to \Tm_{\CC}(A(\rho^{+}(x)))$
    \end{itemize}
  \end{itemize}
  Then we have to construct the following components, strictly naturally in $\Gamma$:
  \begin{itemize}
  \item An object $\Gamma_{0} : \Ob_{\CC}$;
  \item A map $\rho_{0} : \Gamma_{0} \to \Omega$;
  \item Either:
    \begin{itemize}
    \item A map $\xi_{0} : \Gamma_{0}.\Delta[\rho_{0}] \to \Xi$;
    \item A type $A_{0} : \yo(\Gamma_{0}.\Delta[\rho_{0}]) \to \Ty_{\CC}$;
    \item A term $a_{0} : (x : \yo(\Gamma_{0}.\Delta[\rho_{0}])) \to \Tm_{\CC}(A(\rho_{0}^{+}(x)))$;
    \end{itemize}
  \item Such that there exists a unique map $f : \Gamma \to \Gamma_{0}$ satisfying $\rho = \rho_{0}[f]$ and $\xi = \xi_{0}[f^{+}]$, $A = A_{0}[f^{+}]$ or $a = a_{0}[f^{+}]$.
  \end{itemize}

  \begin{mathpar}
    \begin{tikzcd}[row sep=40pt]
      &
      &
      \Xi
      \\
      \Gamma.\Delta[\rho]
      \ar[rr, "\rho^{+}", bend right]
      \ar[r, dashed, "f^{+}"]
      \ar[d, "\bm{p}_{\Delta[\rho]}"]
      \ar[rru, "\xi", bend left]
      &
      \Gamma_{0}.\Delta[\rho_{0}]
      \ar[r, dashed, "\rho_{0}^{+}"]
      \ar[ru, dashed, "\xi_{0}"]
      &
      \Omega.\Delta
      \ar[d, "\bm{p}_{\Delta}"]
      \\
      \Gamma
      \ar[rr, "\rho", bend right]
      \ar[r, dashed, "f"]
      &
      \Gamma_{0}
      \ar[r, dashed, "\rho_{0}"]
      &
      \Omega
    \end{tikzcd}
  \end{mathpar}

  We construct the most general generalizations by induction on the normal forms of $\xi$, $A$ or $a$.
  The strict naturality in $\Gamma$ will be proven in a second step.

  \begin{description}
  \item[Case $(\Xi = \diamond)$ and $(\xi = \angles{})$:] \hfill \\
    We pose $\Gamma_{0} = \Omega$, $\rho_{0} = \id$, $\xi_{0} = \angles{}$ and $f = \rho$.

  \item[Case $(\Xi = \Theta.A)$ and $(\xi = \angles{\theta,a})$:] \hfill \\
    In that case $\theta : \Gamma.\Delta[\rho] \to \Theta$ and $a : (x : \yo(\Gamma.\Delta[\rho])) \to A(\theta(x))$.

    By the induction hypothesis for $\theta$, we have $\Gamma_{0}$, $\rho_{0} : \Gamma_{0} \to \Omega$, $\theta_{0} : \Gamma_{0}.\Delta[\rho_{0}] \to \Theta$ and there exists a unique map $f : \Gamma \to \Gamma_{0}$ such that $\rho_{0}[f] = \rho$ and $\theta_{0}[f^{+}] = \theta$.

    By the induction hypothesis for $a$, we have $\Gamma_{1}$, $\rho_{1} : \Gamma_{1} \to \Gamma_{0}$, $a_{1} : \Gamma_{1}.\Delta[\rho_{0}][\rho_{1}]$ and there is a unique map $g : \Gamma \to \Gamma_{1}$ such that $\rho_{1}[g] = f$ and $a_{1}[g^{+}] = a$.

    We then pose $\Gamma_{2} = \Gamma_{1}$, $\rho_{2} = \rho_{0} \circ \rho_{1}$ and $\xi_{2} = \angles{\theta_{0}[\rho_{1}], a_{1}}$.
    The map $g : \Gamma \to \Gamma_{1}$ is then the unique map such that $\rho_{2}[g] = \rho$ and $\xi_{2}[g^{+}] = \xi$.

  \item[Case ${(A = \iS[\sigma])}$:] \hfill \\
    Here $\sigma : \Gamma \to \partial \iS$.
    We just use the induction hypothesis for $\sigma$, and pose $A_{0} = \iS[\sigma_{0}]$.

  \item[Case ${(a = a'[\bm{p}_{\Delta[\rho]}])}$] \hfill \\
    As a special case, we check if the term $a$ depends on any variable from $\Delta[\rho]$.
    If it can be strengthened to a term $a'$ over $\Gamma$ such that $a'[\bm{p}_{\Delta[\rho]}] = a$, we also know that the type $A$ cannot depend on any variable from $\Delta$, and can be strengthened to $A' : \yo(\Omega) \to \Ty_{\CC}$ such that $A'[\bm{p}_{\Delta}] = A$.
    We then pose $\Gamma_{0} = (\omega:\Omega).(a_{0}:A'(\delta))$, $\rho_{0} = \bm{p}_{A'} : \Gamma_{0} \to \Omega$ and $f = \angles{\rho,a'}$.

  \item[Case ${(\Var_{\Gamma.\Delta[\rho]}(a))}$:] \hfill \\
    If $a$ is a variable from $\Gamma.\Delta[\rho]$, then $a$ has to be variable from $\Delta[\rho]$, as variables from $\Gamma$ are dealt with in the case above.

    Then we let $a_{0}$ be the corresponding variable from $\Delta$ and we pose $\Gamma_{0} = \Omega$, $\rho_{0} = \id$ and $f = \rho$.

  \item[Case ${(a = \opf[\tau])}$:] \hfill \\
    Here $\tau : \Gamma.\Delta[\rho] \to \partial \opf$ and $a : (x : \yo(\Gamma.\Delta[\rho])) \to \Tm_{\CC}(T \opf(\tau(x)))$.
    We then know that $A[\rho^{+}] = T \opf[\tau]$.

    By the induction hypothesis for $\tau$, we have $\Gamma_{0}$, $\rho_{0} : \Gamma_{0} \to \Omega$, $\tau_{0} : \Gamma_{0}.\Delta[\rho_{0}] \to \partial \opf$ and there is a unique map $f : \Gamma \to \Gamma_{0}$ such that $\rho = \rho_{0}[f]$ and $\tau = \tau_{0}[f^{+}]$.

    The types $A[\rho_{0}^{+}]$ and $T\opf[\tau_{0}]$ may differ.
    We know however that they are unifiable by the map $f^{+}$; thus by first-order unification~(\cref{thm:mgu_flex_rigid}), we can find a most general unifier $\rho_{1} : \Gamma_{1} \to \Gamma_{0}$ of these two types.
    By the universal property of the most general unifier, we have a factorization of $f$ as a map $g : \Gamma \to \Gamma_{1}$ followed by $\rho_{1} : \Gamma_{1} \to \Gamma_{0}$.

    Now we pose $\Gamma_{2} = \Gamma_{1}$, $\rho_{2} = \rho_{0} \circ \rho_{1}$, $a_{2} = \opf[\tau[\rho_{1}^{+}]]$.
    The map $g : \Gamma_{1} \to \Gamma_{0}$ is then the unique map such that $\rho_{2}[g] = \rho$ and $a_{2}[g] = \tau$.
  \end{description}

  It remains to prove that the above construction is strictly natural in $\Gamma$: we have to prove for any $\xi$, $A$ or $a$ and any substitution $\sigma : \Lambda \to \Gamma$ that the most general generalizations of $\xi$ and $\xi \circ \sigma$ (or $A$ and $A[\sigma^{+}]$, or $a$ and $a[\sigma^{+}]$) coincide.
  We prove this by induction on the normal forms of $\xi$, $A$ or $a$, following the inductive cases of the previous construction.
  It is then straightforward to check that the construction follows the same cases for both $\xi$ and $\xi[\sigma^{+}]$ (or $A$ and $A[\sigma^{+}]$, or $a$ and $a[\sigma^{+}]$).

  The main subtlety happens when $a : \yo(\Gamma.\Delta[\rho]) \to \Tm_{\CC}(-)$ is a variable from $\Gamma$.
  In that case, the substituted term $a[\sigma^{+}] : \yo(\Lambda.\Delta[\rho][\sigma]) \to \Tm_{\CC}(-)$ is not necessarily a variable.
  However it can be strengthened to a term that only depends on $\Lambda$.
  Thus our construction of the most general generalization of both $a$ and $a[\sigma^{+}]$ will use the special case for terms that don't depend on $\Delta$.
  Without this special case, we would not be able to prove that our construction is strictly natural in $\Gamma$.
  \qedhere{}
\end{proof}

\begin{proof}[Proof of~\cref{thm:mgg_free}]
  This follows from~\cref{prop:free_mono_lfr} and~\cref{prop:lfr_telescopes}.
\end{proof}

\subsection{Strictification}

\restateLeftStrictification*
\begin{proof}
  Let $\CD$ be an $I$-cellular replacement of $\CC$.
  We have a trivial fibration $F : \CD \to \CC$ in $\CCwf$.
  By~\cref{prop:tfib_lift_id}, $\CD$ can be equipped with weakly stable identity types $\Id$ that are strictly preserved by $F$.

  By~\cref{thm:mgg_free}, $\CD$ has familially representable polynomials sorts.
  Thus by~\cref{thm:lfr_poly_strict_id}, $\CD$ has stable identity types $\Id^{s}$ that are weakly equivalent to the weakly stable identity types.
  In other words, the CwF morphism $\id : (\CD,\Id^{s}) \to (\CD,\Id)$ weakly preserves identity types.
  Then the composition $(\CD,\Id^{s}) \xrightarrow{\id} (\CD,\Id) \xrightarrow{F} (\CC,\Id)$ weakly preserves identity types.
\end{proof}

\section{Other type-theoretic structures}\label{sec:other_structures}

So far we have only considered (weak) identity types.
However our methods can more generally be applied to any weakly stable weak type-theoretic structure.
Indeed the proofs of~\cref{thm:lfr_poly_strict_id} and \cref{thm:left_strictification_id} only rely on~\cref{prop:ws_wsid} and on the fact that the parameters of the identity introduction and elimination structures can be specified by (closed) polynomial sorts.
Thus the same proof scheme works for any type-theoretic structure that is weakly stable (in the sense that it satisfies a variant of~\cref{prop:ws_wsid}).
This holds in particular for most standard type-theoretic structures, including $\Pi$-types, $\Sigma$-types, coproducts, natural numbers and other inductive types, \etc.

Note that in general, weak structures can only be specified in presence of identity types; thus their strictification depends on the strictification of identity types.
It is then necessary to see~\cref{thm:lfr_poly_strict_id} as a construction.

\section{Towards full coherence theorems}\label{sec:towards_coherence}

We have presented general strictification methods for weakly stable weak type-theoretic structures.
However we generally want coherence theorems that give a more precise comparison between the categories $\CCwf^{\cxl}_{s}$ and $\CCwf^{\cxl}_{ws}$ of contextual CwFs equipped with stable or weakly stable weak type-theoretic structures (for some unspecified choice of such structures).

Following~\cite{HomotopyTheoryTTs,IsaevModelStructuresOnModels}, we expect that these categories can be equipped with cofibrantly generated left-semi model structures, with trivial fibrations as defined in~\cref{def:trivial_fibration}.
We then want to prove that the free-forgetful adjunction
\begin{mathpar}
  \begin{tikzcd}
    \CCwf^{\cxl}_{ws}
    \ar[r,"L",bend left]
    \ar[r,phantom,"\bot"]
    &
    \CCwf^{\cxl}_{s}
    \ar[l,"R",bend left]
  \end{tikzcd}
\end{mathpar}
is a Quillen equivalence.
This notion of \emph{Morita equivalence} between type theories has been studied by Isaev~\cite{IsaevMoritaEquivalences}, albeit only for strictly stable type-theoretic structures.

We recall the definition of weak equivalence~\cite{HomotopyTheoryTTs} between CwFs.
\begin{definition}
  Let $F : \CC \to \CD$ be a CwF morphism, where $\CD$ is equipped with weakly stable weak identity types.
  The map $F$ is a \defemph{weak equivalence} if it is essentially surjective on types and terms, \ie{} if it satisfies the following weak type and term lifting conditions:
  \begin{description}
    \item[(weak type lifting)] For every $\Gamma : \Ob_{\CC}$ and type $A : \yo(F(\Gamma)) \to \Ty_{\CD}$, there exists a type $A_{0} : \yo(\Gamma) \to \Ty_{\CC}$ and an equivalence between $F(A_{0})$ and $A$ over $F(\Gamma)$.
    \item[(weak term lifting)] For every $\Gamma : \Ob_{\CC}$, type $A : \yo(\Gamma) \to \Ty_{\CC}$ and term $a : (\gamma : \yo(F(\Gamma))) \to \Tm_{\CD}(F(A)(\gamma))$, there exists a term $a_{0} : (\gamma : \yo(\Gamma)) \to \Tm_{\CC}(A(\gamma))$ and a typal equality between $F(a_{0})$ and $a$ over $F(\Gamma)$.
          \lipicsEnd{}
  \end{description}
\end{definition}

\begin{conjecture}
  The theories of weakly stable weak identity types and strictly stable weak identity types of are Morita equivalent:
  for every $I_{ws}$-cellular model $\CC : \CCwf_{ws}$, the unit $\eta : \CC \to L(\CC)$ is a weak equivalence.
  \lipicsEnd{}
\end{conjecture}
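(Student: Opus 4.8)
The plan is to assume, as anticipated in~\cite{HomotopyTheoryTTs,IsaevModelStructuresOnModels}, that both $\CCwf^{\cxl}_{ws}$ and $\CCwf^{\cxl}_{s}$ carry cofibrantly generated left-semi model structures whose trivial fibrations are those of~\cref{def:trivial_fibration} and whose weak equivalences are the weak equivalences defined above. The first task is to verify that $L \dashv R$ is a Quillen adjunction: since $R$ is the forgetful functor that merely witnesses a strictly stable structure as a weakly stable one, it preserves the underlying CwF and hence preserves both fibrations and trivial fibrations, so the adjunction is Quillen. Under the identification of $RL(\CC)$ with $L(\CC)$ (they share an underlying CwF), it then suffices to show that the unit $\eta_{\CC} : \CC \to RL(\CC)$ is a weak equivalence for every cofibrant, i.e.\ $I_{ws}$-cellular, $\CC$; this is exactly the content of the conjecture.

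The core of the argument uses~\cref{thm:left_strictification_id} as the essential-surjectivity input. Given cofibrant $\CC$, that theorem produces a model $\CD$ with strictly stable weak identity types together with a trivial fibration $F : \CD \to \CC$ that weakly preserves identity types; in particular $\CD$ is an object of $\CCwf^{\cxl}_{s}$ and, viewing it through $R$, the map $F$ is a weak equivalence in $\CCwf^{\cxl}_{ws}$. Because $\CC$ is cofibrant and $F$ is an acyclic fibration, the identity $\id_{\CC}$ lifts through $F$ to a section $s : \CC \to R(\CD)$ in $\CCwf^{\cxl}_{ws}$; from $F \circ s = \id_{\CC}$ and the two-out-of-three property, $s$ is itself a weak equivalence. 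Transposing $s$ across the adjunction yields $\bar{s} : L(\CC) \to \CD$ with $s = R(\bar{s}) \circ \eta_{\CC}$. Thus, once $R(\bar{s})$ is shown to be a weak equivalence, two-out-of-three forces $\eta_{\CC}$ to be one as well, and the conjecture follows.

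To see that $R(\bar{s})$ is a weak equivalence I would exploit that both sides are freely generated: $\CD$ arises as an $I$-cellular replacement of $\CC$ carrying the strict structure built in~\cref{thm:lfr_poly_strict_id}, while $L(\CC)$ freely adjoins strictly stable operations to the cofibrant $\CC$. Using the normal-form description of~\cref{constr:normal_forms} for the free CwF underlying $\CD$, together with the universal property of $L$, one would compare the two strict structures directly: the strict operations of $\CD$ are, by construction, the weakly stable operations of $\CC$ instantiated at generic contexts, and these are precisely the images under $\bar{s}$ of the freely adjoined operations of $L(\CC)$. The aim is to conclude that $\bar{s}$ is essentially surjective on types and terms, because every type or term of $\CD$ is assembled from those of $\CC$ and the strict formers, each of which is hit up to typal equivalence. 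This comparison between the freely generated strict structure and the hand-built strictification is the technical heart of the proof.

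The main obstacle is twofold. First, one must genuinely construct the two left-semi model structures and verify that $L$ is left Quillen with the expected homotopical behaviour; this is delicate because, as noted in~\cref{sec:other_structures}, weak type-theoretic structures can only be specified in the presence of identity types, so $L$ must adjoin the strict structure coherently over the identity types rather than in isolation. Second, and more seriously, controlling $\bar{s}$ up to weak equivalence amounts to comparing strictly stable \emph{weak} structures with the structure freely generated by $L$, which is exactly the kind of coherence statement (strictly stable weak versus strictly stable strict) deferred to the ongoing work~\cite{CoherenceStrictEqualities}. Until that comparison is available, the argument above reduces the conjecture to it but does not close it, which is why the statement is recorded here as a conjecture rather than a theorem.
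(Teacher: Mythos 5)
There is a genuine gap, and it is precisely the one the paper flags in its own outline. Your key step is: ``Because $\CC$ is cofibrant and $F$ is an acyclic fibration, the identity $\id_{\CC}$ lifts through $F$ to a section $s : \CC \to R(\CD)$ in $\CCwf^{\cxl}_{ws}$.'' For this lifting property to apply, $F$ must be a \emph{morphism} of $\CCwf^{\cxl}_{ws}$ (and a trivial fibration there), i.e.\ it must strictly preserve the weakly stable weak identity types when $\CD$ is regarded as $R$ of an object of $\CCwf^{\cxl}_{s}$. But \cref{thm:left_strictification_id} gives strictly less: writing $\Id^{s}$ for the strictly stable structure on $\CD$, the map $F : (\CD,\Id^{s}) \to (\CC,\Id_{\CC})$ only \emph{weakly} preserves identity types, so it is not an arrow of $\CCwf^{\cxl}_{ws}$ at all, and the lifting argument has nothing to lift against. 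The paper says this explicitly: if $F$ were a morphism in $\CCwf_{ws}$ one could get a strict section from cofibrancy, but since $F$ only preserves the structure weakly, one can only construct a \emph{homotopy} section $G$ with $F \circ G \sim \id_{\CC}$, and that construction requires the homotopical gluing model.

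The natural attempted repair also fails, for a reason worth spelling out. In the proof of \cref{thm:left_strictification_id}, $\CD$ carries \emph{two} structures: the weakly stable $\Id$ lifted along $F$ (which $F$ preserves strictly, by \cref{prop:tfib_lift_id}) and the strictly stable $\Id^{s}$ built from generic contexts (by \cref{thm:lfr_poly_strict_id}), and these agree only up to equivalence. If you make $F$ a trivial fibration in $\CCwf^{\cxl}_{ws}$ by equipping $\CD$ with $\Id$, then cofibrancy does give a strict section $s : \CC \to (\CD,\Id)$; but this $s$ carries $\Id_{\CC}$ to $\Id$, not to $\Id^{s}$, so it is not a morphism $\CC \to R(\CD,\Id^{s})$ and cannot be transposed across $L \dashv R$ to produce your $\bar{s}$. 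Your argument implicitly conflates the two structures on $\CD$: no single map can be both a strict section of $F$ and structure-preserving into $R(\CD,\Id^{s})$, precisely because $F$ does not carry $\Id^{s}$ to $\Id_{\CC}$ on the nose. This is why the paper's outline takes the route it does: a strict morphism $G : \CC \to R(\CD,\Id^{s})$ in $\CCwf_{ws}$ that is only a homotopy section ($\alpha : F \circ G \sim \id_{\CC}$), obtained by mapping into a homotopical gluing model, followed by transposition to $T : L(\CC) \to \CD$ and the construction of a homotopy $\beta : \eta \circ F \circ T \sim \id_{L(\CC)}$ together with a higher homotopy $\gamma$ comparing $\beta \circ \eta$ with $\eta \circ \alpha$. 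Your final reduction to the coherence problem of~\cite{CoherenceStrictEqualities} is a reasonable instinct, but as stated the chain of reasoning leading to it is broken at the lifting step, whereas the paper's (also incomplete) outline reduces the conjecture instead to the construction of three homotopical gluing models.
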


Here the $I_{ws}$-cellular models are the freely generated models in $\CCwf_{ws}$.
Note that they do not coincide with the $I$-cellular CwFs.

We give an informal outline of a likely proof of this result.
We leave a detailed proof to future work.

Fix a $I_{ws}$-cellular model $\CC : \CCwf_{ws}$.
Since $\CC$ is freely generated, it admits a syntactic description and satisfies a universal property; a morphism $\CC \to \CE$ in $\CCwf_{ws}$ is determined by the image of the generating types and terms.

By~\cref{thm:left_strictification_id}, or a generalization to additional type formers, we have a CwF $\CD$, equipped with strictly stable type structures, along with a trivial fibration $F : \CD \to \CC$ in $\CCwf$ that weakly preserves the various type structures.

By induction on the syntax of $\CC$, we construct a morphism $G : \CC \to \CD$ in $\CCwf_{ws}$ along with a homotopy $\alpha : F \circ G \sim \id_{\CC}$.
In other words, we construct a homotopy section $G$ of $F$.
If $F$ was a morphism in $\CCwf_{ws}$, we could obtain a (strict) section from the fact that $\CC$ is cofibrant in $\CCwf_{ws}$ and satisfies a strict lifting property with respect to trivial fibrations.
Since $F$ only preserves the type-theoretic structures weakly, we can only construct a homotopy section.

More precisely, this induction can be described using the \emph{homotopical gluing} of $F : \CD \to \CC$; it is a model $\CG : \CCwf_{ws}$ that classifies the homotopy sections of $F$.
Its objects are triples $(\Gamma,\Delta,e)$, where $\Gamma : \Ob_{\CD}$, $\Delta : \Ob_{\CC}$ and $e$ is an equivalence between $F(\Delta)$ and $\Gamma$.
Its construction ought to be similar to other constructions of homotopical gluing models~\cite{ShulmanUnivalenceInverseDiagrams} and homotopical diagram models~\cite{HomotopicalInverseDiagrams}.

The universal property of $\CC$ then provides a section of $\pi_{2} : \CG \to \CC$, which can be decomposed into a morphism $G : \CC \to \CD$ and a homotopy $\alpha : F \circ G \sim \id_{\CC}$.
\begin{mathpar}
  \begin{tikzcd}
    \CG \ar[r, "\pi_{1}"] \ar[d, "\pi_{2}"] &
    \CD \ar[dl, "F", bend left] \\
    \CC
    \ar[u, dashed, bend left, "\angles{G,\id,\alpha}"]
    &
  \end{tikzcd}
\end{mathpar}

By the universal property of $L(\CC)$, we obtain a map $T : L(\CC) \to \CD$ in $\CCwf_{s}$ such that $T \circ \eta = G$.
\begin{mathpar}
  \begin{tikzcd}
    \CD
    \ar[d, "F"] & \\
    \CC
    \ar[u, "G", bend left]
    \ar[r, "\eta"'] &
    L(\CC)
    \ar[lu, "T"', bend right, dashed]
  \end{tikzcd}
\end{mathpar}

We can now attempt to prove the weak type lifting property for $\eta$.
For any context $\Gamma : \Ob_{\CC}$ and type $A : \yo(\eta \Gamma) \to \Ty_{L(\CC)}$ of $L(\CC)$, we have a candidate lift $F(T(A))[\alpha] : \yo(\Gamma) \to \Ty_{\CC}$.
It remains to prove that $\eta(F(T(A))[\alpha])$ is equivalent to $A$, or equivalently that $\eta(F(T(A)))$ is equivalent to $A$ over the context equivalence $\eta(\alpha_{\Gamma}) : \eta(F(G(\Gamma))) \cong \eta(\Gamma)$.

It suffices to construct a homotopy $\beta : \eta \circ F \circ T \sim \id_{L(\CC)}$ along with a higher homotopy $\gamma$ between the homotopies $\beta \circ \eta$ and $\eta \circ \alpha$.
We expect that these homotopies can be constructed using the universal properties of respectively $L(\CC)$ and $\CC$, by mapping into some other homotopical gluing models.
The weak term lifting property also follows from the existence of these homotopies.

Thus, we have essentially reduced the proof of the Morita equivalence between weakly stable and strictly stable structures to the construction of three homotopical gluing models.

\bibliography{main}

\end{document}